%
%
%
%
%
\RequirePackage{fix-cm}
\documentclass[smallcondensed]{svjour3}     
\smartqed  
\usepackage{graphicx}
\usepackage{amsmath}
\usepackage{float}
\usepackage{amssymb}
\usepackage{bm}
\usepackage{cite}
\usepackage[colorlinks,linkcolor=blue,anchorcolor=blue,citecolor=blue]{hyperref}
\hypersetup{urlcolor=blue}
%
%
%
%
\begin{document}

\title{On pure MDS asymmetric entanglement-assisted quantum error-correcting codes
}


\author{Ziteng Huang         \and
        Weijun Fang          \and
        Fang-Wei Fu
}


\institute{Ziteng Huang\at
              Chern Institute of Mathematics and LPMC, Nankai University, Tianjin 300071, China\\
              \email{hzteng@mail.nankai.edu.cn}
           \and
           Weijun Fang (Corresponding Author)\at
              Shenzhen International Graduate School, Tsinghua University,\\
              and PCL Research Center of Networks and Communications, Peng Cheng Laboratory,\\
              Shenzhen 518055, P. R. China\\
              \email{nankaifwj@163.com}
           \and
           Fang-Wei Fu\at
              Chern Institute of Mathematics and LPMC, and Tianjin Key Laboratory of Network and Data Security Technology, Nankai University,\\
              Tianjin, 300071, P. R. China\\
              \email{fwfu@nankai.edu.cn}
}

\date{Received: date / Accepted: date}

\maketitle

\begin{abstract}
Recently, Galindo et al. introduced the concept of asymmetric entangle-ment-assisted quantum error-correcting codes (AEAQECCs) from Calderbank-Shor-Steane (CSS) construction. In general, it's  difficult to determine the required number of maximally entangled states of an AEAQECC, which is associated with the dimension of the intersection of the two corresponding linear codes. Two linear codes are said to be a linear $l$-intersection pair if their intersection has dimension $l$. In this paper, all possible linear $l$-intersection pairs of MDS codes are given. As an application, we give a complete characterization of pure MDS AEAQECCs for all possible parameters.
\keywords{asymmetric entanglement-assisted quantum error-correcting codes \and linear codes \and linear $l$-intersection pairs \and generalized Reed-Solomon codes}
\end{abstract}

\section{Introduction}
\label{intro}
The theory of quantum error-correcting codes has developed rapidly after the works of Shor \cite{1} and Steane \cite{2,3}. Calderbank et al. \cite{4} gave systematic methods to construct quantum codes via classical self-orthogonal codes (or dual containing codes) over finite fields, called \emph{Calderbank-Shor-Steane} (CSS) construction. For overcoming the constraint of self-orthogonality, Brun et al. \cite{5} introduced the \emph{entanglement-assisted quantum error-correcting} codes (EAQECCs) by sharing entanglement between encoder and decoder. Recently, several classes of EAQECCs have been constructed \cite{6,7,8,9,10}.

In \cite{11}, Ioffe and M$\acute{e}$zart noticed that phase-shift errors happened more likely than qudit-flip errors. Therefore, considering EAQECCs in the asymmetric quantum channels is a valuable problem. Galindo et al. \cite{12} introduced the concept of \emph{asymmetric entanglement-assisted quantum error-correcting} codes (AEAQECCs) and gave the \emph{Gilbert-Varshamov bound} for AEAQECCs. Then they presented the explicit computation of the parameters of AEAQECCs via BCH codes. Liu et al. \cite{13} constructed three new families of AEAQECCs by means of Vandermonde matrices, extended GRS codes and cyclic codes.

The required number of maximally entangled states of an AEAQECC is determined by the dimension of the intersection of the two corresponding linear codes. Two linear codes are said to be a linear $l$-intersection pair if their intersection has dimension $l$. In \cite{14}, Guenda et al. constructed linear $l$-intersection pairs of MDS codes over $\mathbb{F}_{q}$ with length up to $q+1$ for most of the parameters.

In this paper, we firstly complement the results in \cite{14}. Specifically, we construct linear $l$-intersection pairs of two MDS codes with parameters $[n, k_{1}, n-k_{1}+1]_{q}$ and $[n, k_{2}, n-k_{2}+1]_{q}$, where $(n,k_{1},k_{2},l)=(q,l+1,l+1,l)$ for $ 0\leq l\leq q-2$ and $n=q+1$ with $1 \in \{ l, k_{1}-l, k_{2}-l \}$ for $k_{1},k_{2}\leq q$. Moreover, we construct all possible linear $l$-intersection pairs of MDS codes over $\mathbb{F}_{2^{m}}$ with length $n=2^{m}+2\geq6$. In summary, assuming the validity of the MDS Conjecture (Conjecture 1 in Section 3), all possible linear $l$-intersection pairs of MDS codes are given. As an application, we give a complete characterization of pure MDS AEAQECCs for all possible parameters. We list our main results as follows.

Let $q\geq3$ be a prime power and $n,k_{1}, k_{2}, l$ be non-negative integers. There exists a linear $l$-intersection
pair of two MDS codes with parameters $[n, k_{1},n-k_{1}+1]_{q}$ and $[n, k_{2}, n-k_{2}+1]_{q}$ if one of the following conditions holds (see Theorem 8):\\
(i) $n\leq q+1$, $k_{1},k_{2}\leq n-1$, $\max \{k_{1}+k_{2}-n,0\} \leq l \leq \min\{k_{1}, k_{2}\}$ (except $(n,k_{1},k_{2},l)\in \{(q+1,2,1,1),(q+1,1,2,1) \}$); \\
(ii) $q=2^{m}\geq 4$, $n=q+2$, $(k_{1},k_{2})\in \{(3,q-1), (q-1,3),(3,3)\}$, $0\leq l\leq3$;\\
(iii) $q=2^{m}\geq 4$, $n=q+2$, $(k_{1},k_{2})=(q-1,q-1)$, $q-4\leq l\leq q-1$.

Let $q\geq3$ be a prime power and $n,k_{1}, k_{2}, l$ be non-negative integers. There exists  a  pure MDS  $[[n,k_{2}-l,(k_{1}+1)/(n-k_{2}+1),k_{1}-l]]_{q}$ AEAQECC if one of the following conditions holds (see Theorem 9):\\
(i) $n\leq q+1$, $k_{1},k_{2}\leq n-1$, $\max \{k_{1}+k_{2}-n,0\} \leq l <\min\{k_{1}, k_{2}\}$; \\
(ii) $q=2^{m}\geq 4$, $n=q+2$, $(k_{1},k_{2})\in \{(3,q-1), (q-1,3),(3,3)\}$, $0\leq l\leq 2$;\\
(iii) $q=2^{m}\geq 4$, $n=q+2$, $(k_{1},k_{2})=(q-1,q-1)$, $q-4\leq l\leq q-2$.

The organization of this paper is presented as follows. In Section 2, we introduce some notions and results about linear codes, linear $l$-intersection pairs, quantum codes and AEAQECCs. In Section 3, all possible linear $l$-intersection pairs of MDS codes are given. In Section 4, we give a complete characterization of pure MDS AEAQECCs for all possible parameters. In Section 5, we conclude this paper.
\section{Preliminaries}
In this section, we introduce some notions and results about linear codes, linear $l$-intersection pairs, quantum codes and AEAQECCs.
\subsection{Linear codes and linear $l$-intersection pairs}
Let $q$ be a prime power and $\mathbb{F}_{q}$ be a finite field with $q$ elements. An $[n, k]_{q}$ linear code is a $k$-dimensional subspace of $\mathbb{F}_{q}^{n}$. For two vectors $\bm{a}, \bm{b}\in \mathbb{F}_{q}^{n}$, the (\emph{Hamming}) \emph{weight} $wt(\bm{a})$ of $\bm{a}$ is the number of nonzero components of $\bm{a}$ and the (\emph{Hamming}) \emph{distance} $d(\bm{a},\bm{b})$ between $\bm{a}$ and $\bm{b}$ is the number of positions at which the corresponding components are different, i.e., $d(\bm{a},\bm{b})=wt(\bm{a}-\bm{b})$. For a subset $\mathcal{A}$ of $\mathbb{F}_{q}^{n}$, the (Hamming) weight $wt(\mathcal{A})=\min\{wt(\bm{a}):\bm{a}\in \mathcal{A}\setminus \{\bm{0}\}\}$. The minimum Hamming distance $d(C)$ of a code $C$ is the minimum Hamming distance between any two distinct codewords. For a linear code $C$, we have $d(C)=wt(C)$. If the minimum Hamming distance $d$ of an $[n, k]_{q}$ linear code is known, we refer to the code as an $[n,k,d]_{q}$ linear code. One of the relations among these parameters is the \emph{Singleton bound}, which says that any $[n, k, d]_{q}$ linear code has to satisfy that
$$d \leq n - k +1.$$
An $[n, k, d]_{q}$ linear code is called a \emph{maximum distance separable} (MDS) code if it achieves the Singleton bound with equality. Let $A_{i}$ be the number of codewords of Hamming weight $i$ in a linear code $C$. The list $A_{i}$ for $0\leq i\leq n$ is called the \emph{weight distribution} of $C$. The weight distribution of an MDS code is given as follows.
\begin{theorem}\cite[pp. 262-265]{15}
Let $C$ be an $[n,k,d]_{q}$ MDS code where $d=n-k+1$. Then for $d\leq i\leq n$,
$$A_{i}={n\choose i}(q-1)\sum_{j=0}^{i-d}(-1)^{j} {i-1\choose j} q^{i-d-j}.$$
\end{theorem}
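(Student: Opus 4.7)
The plan is to derive the weight distribution by first counting codewords of $C$ that vanish on a prescribed set of coordinates, and then extracting codewords of a prescribed exact support via binomial inversion.

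First I would exploit the MDS hypothesis in its dual form: since $C^{\perp}$ is also MDS (of dimension $n-k$), any $d-1=n-k$ coordinates can be freely and independently set to zero inside $C$. Concretely, for any subset $S\subseteq\{1,\dots,n\}$ with $|S|=i\geq d$, the shortened space $C(S)=\{c\in C:c_{p}=0\text{ for all }p\notin S\}$ has dimension exactly $k-(n-i)=i-d+1$, whence $|C(S)|=q^{i-d+1}$; for $i<d$ only the zero codeword lies in $C(S)$.

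Next, let $g_{i}$ denote the number of codewords of $C$ whose support equals a fixed $i$-subset. By the permutation-invariant form of the preceding count, $g_{i}$ depends only on $i$, and consequently $A_{i}={n\choose i}g_{i}$. Partitioning $C(S)$ according to the exact support of its codewords yields
$$\sum_{w=0}^{i}{i\choose w}g_{w}=\begin{cases}1,& i<d,\\ q^{i-d+1},& i\geq d,\end{cases}$$
together with $g_{0}=1$ and $g_{w}=0$ for $1\leq w<d$. Standard binomial inversion immediately produces
$$g_{i}=\sum_{w=0}^{d-1}(-1)^{i-w}{i\choose w}+\sum_{w=d}^{i}(-1)^{i-w}{i\choose w}q^{w-d+1}.$$

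The only substantive remaining task is to rewrite this alternating sum as $(q-1)\sum_{j=0}^{i-d}(-1)^{j}{i-1\choose j}q^{i-d-j}$. The cleanest route is to apply Pascal's identity ${i\choose w}={i-1\choose w}+{i-1\choose w-1}$ so that the full expression collapses to $\sum_{w}(-1)^{i-w}{i-1\choose w}(b_{w}-b_{w+1})$, where $b_{w}$ denotes the piecewise right-hand side above. Since $b_{w}-b_{w+1}$ vanishes for $w<d-1$ and equals $-(q-1)q^{\max(w-d+1,0)}$ otherwise, the sum telescopes to the claimed closed form after the reindexing $j=i-1-w$ and absorbing the boundary term at $w=d-1$ into the main sum. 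This bookkeeping, with its careful tracking of signs and index shifts, is the one place where mistakes would easily creep in, and I would regard it as the main obstacle in an otherwise conceptually straightforward argument.
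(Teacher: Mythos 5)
Your proof is correct. The paper gives no proof of this statement---it is quoted verbatim from Huffman--Pless \cite{15}---and your argument (shortening an MDS code to get $|C(S)|=q^{i-d+1}$ for $|S|=i\geq d$, binomial/M\"obius inversion over supports, then Pascal plus telescoping) is precisely the standard derivation found there; I checked the final bookkeeping step you flagged, and the identity $\sum_{w=0}^{i-1}(-1)^{i-w}\binom{i-1}{w}(b_w-b_{w+1})=(q-1)\sum_{j=0}^{i-d}(-1)^{j}\binom{i-1}{j}q^{i-d-j}$ does come out as claimed.
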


For $\bm{a}=(a_{1},a_{2},\ldots,a_{n}) \in \mathbb{F}_{q}^{n}$ and $\bm{b}=(b_{1},b_{2},\ldots,b_{n}) \in \mathbb{F}_{q}^{n}$, their (Euclidean) inner product is defined as
$\langle \bm{a},\bm{b}\rangle =\sum^{n}_{i=1} a_{i}b_{i}$. The (Euclidean) dual code of $C$ is defined as
$$ C^{\bot} = \{ \bm{a} \in \mathbb{F}_{q}^{n} : \langle \bm{a},\bm{b}\rangle = 0 , \textnormal{ for  any } \bm{b} \in C \}.$$
If $C \subseteq C^{\bot}$ ($C = C^{\bot}$), then $C$ is called a \emph{self-orthogonal} (\emph{self-dual}) code.

As an important class of MDS codes, the \emph{generalized Reed-Solomon} (GRS) code is the main tool in this paper.
Let $\mathcal{A}=\{a_{1},a_{2},\ldots,a_{n} \}$ be a subset of $\mathbb{F}_{q}$ with $n$ distinct elements and $\bm{v} =(v_{1},v_{2},\ldots,v_{n})$ where $v_{1},v_{2},\ldots,v_{n}$ are nonzero elements (not necessarily distinct) in $\mathbb{F}_{q}$. The GRS code associated to $\mathcal{A}$ and $\bm{v}$ is defined as
$$ GRS_{k}(\mathcal{A},\bm{v})=\{(v_{1}f(a_{1}),\ldots ,v_{n}f(a_{n})) :f(x) \in \mathbb{F}_{q}[x], \deg (f(x)) \leq k-1 \}.$$
The \emph{extended} GRS code associated to $\mathcal{A}$ and $\bm{v}$ is defined as
$$ GRS_{k}(\mathcal{A}\cup\infty,\bm{v})=\{(v_{1}f(a_{1}),\ldots ,v_{n}f(a_{n}),f_{k-1}) : f(x) \in \mathbb{F}_{q}[x], \deg (f(x))\leq k-1 \} $$
where $f_{k-1}$ stands for the coefficient of $x^{k-1}$ in $f(x)$. It is well-known that (extended) GRS codes are MDS codes and so are their dual codes.

Linear codes $C_{1}$ and $C_{2}$ over $\mathbb{F}_{q}$ with length $n$ are called a linear $l$-intersection pair if $\dim(C_{1} \cap C_{2} ) = l$. Using basic linear algebra, the lemma which gives the range of $l$ is given as follows.
\begin{lemma}\cite[Lemma 2.2]{14}
If there exists a linear $l$-intersection pair of  two linear codes with parameters $[n, k_{1}]_{q}$ and $[n, k_{2}]_{q}$, then $\max \{k_{1}+k_{2}-n,0\} \leq l \leq \min \{k_{1}, k_{2}\}$.
\end{lemma}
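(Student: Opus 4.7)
The plan is to establish the two-sided bound on $l = \dim(C_1 \cap C_2)$ by exploiting the standard dimension formula for the sum and intersection of two subspaces of $\mathbb{F}_q^n$. The upper bound $l \leq \min\{k_1, k_2\}$ is immediate because $C_1 \cap C_2$ is a subspace of both $C_1$ and $C_2$, so its dimension cannot exceed either of the dimensions $k_1$ or $k_2$. Likewise, $l \geq 0$ trivially since $l$ is the dimension of a vector space.

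For the nontrivial lower bound $l \geq k_1 + k_2 - n$, I would invoke the inclusion--exclusion identity
\[
\dim(C_1 + C_2) = \dim(C_1) + \dim(C_2) - \dim(C_1 \cap C_2) = k_1 + k_2 - l.
\]
Since $C_1 + C_2$ is a subspace of $\mathbb{F}_q^n$, its dimension is at most $n$, and rearranging $k_1 + k_2 - l \leq n$ yields $l \geq k_1 + k_2 - n$. Combining this with $l \geq 0$ gives $l \geq \max\{k_1 + k_2 - n, 0\}$, completing the argument.

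There is no real obstacle here: the statement is a direct consequence of the rank--nullity / sum--intersection identity for finite-dimensional vector spaces, and the ambient space $\mathbb{F}_q^n$ provides the only external constraint. The proof is essentially a two-line application of basic linear algebra, so the only task when writing it up is to present both inequalities cleanly and note that the quoted range is in fact attained in the constructions that follow later in the paper.
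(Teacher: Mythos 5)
Your proof is correct and is exactly the ``basic linear algebra'' argument the paper alludes to when it quotes this result from \cite[Lemma 2.2]{14} without reproving it: the upper bound from $C_1\cap C_2$ being a subspace of each $C_i$, and the lower bound from $\dim(C_1+C_2)=k_1+k_2-l\leq n$. No issues.
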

To determine the dimension $l$ of the intersection of two linear codes, the following lemma gives a relation between $l$, generator matrices and parity check matrices of the two corresponding codes, which is useful in our constructions in Section 3.
\begin{lemma}\cite[Theorem 2.1]{14}
Let $C_{1}$ be an $[n,k_{1}]_{q}$ linear code with generator matrix $G_{1}$ and  $C_{2}$ be an $[n,k_{2}]_{q}$ linear code with parity check matrix $H_{2}$. Then $\dim(C_{1}\cap C_{2})=l$ if and only if  $\textnormal{rank} (G_{1}H_{2}^{T})=k_{1}-l$.
\end{lemma}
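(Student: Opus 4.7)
The plan is to translate ``a codeword of $C_1$ lies in $C_2$'' into a purely matrix-theoretic condition on $G_1 H_2^T$ and then apply rank--nullity. Every element of $C_1$ can be written uniquely as $\bm{x} G_1$ for some row vector $\bm{x} \in \mathbb{F}_q^{k_1}$, since the $k_1$ rows of $G_1$ form a basis of $C_1$. By the defining property of a parity check matrix, $\bm{x} G_1 \in C_2$ iff $H_2 (\bm{x} G_1)^T = \bm{0}$, which I would rewrite as $\bm{x} (G_1 H_2^T) = \bm{0}$.

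Next I would package this as an isomorphism of vector spaces. The map $\varphi : \mathbb{F}_q^{k_1} \to C_1$ defined by $\bm{x} \mapsto \bm{x} G_1$ is an isomorphism because $G_1$ has full row rank, so it restricts to an isomorphism between the left kernel
\[
N \;=\; \{\bm{x} \in \mathbb{F}_q^{k_1} \,:\, \bm{x}(G_1 H_2^T) = \bm{0}\}
\]
and $C_1 \cap C_2$. Hence $\dim (C_1 \cap C_2) = \dim N$.

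Finally, applying the rank--nullity theorem to the $k_1 \times (n-k_2)$ matrix $G_1 H_2^T$ (viewed as the linear map $\bm{x} \mapsto \bm{x}(G_1 H_2^T)$ on row vectors) gives $\dim N = k_1 - \mathrm{rank}(G_1 H_2^T)$. Combining the two displays yields $\dim (C_1 \cap C_2) = k_1 - \mathrm{rank}(G_1 H_2^T)$, which is equivalent to the stated ``iff'' once one sets the left-hand side equal to $l$.

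There is no genuine obstacle here: the whole argument is basic linear algebra, and the only care required is to keep row/column conventions consistent (writing codewords of $C_1$ as $\bm{x} G_1$ so that the relevant nullspace is the \emph{left} nullspace of $G_1 H_2^T$, whose dimension is $k_1 - \mathrm{rank}(G_1 H_2^T)$ rather than $(n-k_2) - \mathrm{rank}(G_1 H_2^T)$).
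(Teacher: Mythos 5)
Your proof is correct. The paper itself gives no proof of this lemma---it imports it directly from \cite[Theorem~2.1]{14}---so there is nothing internal to compare against; your argument (parametrizing $C_1$ as $\{\bm{x}G_1\}$, identifying $C_1\cap C_2$ with the left null space of $G_1H_2^{T}$, and applying rank--nullity) is the standard proof of this fact, with the row/column conventions handled correctly.
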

\subsection{Quantum codes and AEAQECCs}
Let $\mathbb{C}$ be the complex field and $\mathbb{C}^{q}$ be the $q$-dimensional Hilbert space over $\mathbb{C}$. A quantum state is called a \emph{qubit} which is a nonzero vector of $\mathbb{C}^{q}$. A qubit $|v\rangle$ can be expressed as
$$|v\rangle=\sum_{a \in \mathbb{F}_{q}}v_{a}|a\rangle,$$
where $\{|a\rangle:a\in \mathbb{F}_{q}\}$ is a basis of $\mathbb{C}^{q}$ and $v_{a} \in \mathbb{C}$. An $n$-qubit is a nonzero vector in the $q^{n}$-dimensional Hilbert space $(\mathbb{C}^{q})^{\otimes n}\cong \mathbb{C}^{q^{n}}$, which can be expressed as
$$|\bm{v}\rangle=\sum_{\bm{a} \in \mathbb{F}_{q}^{n}}v_{\bm{a}}|\bm{a}\rangle,$$
where $\{|\bm{a}\rangle=|a_{1}\rangle\otimes \cdots \otimes |a_{n}\rangle: (a_{1},\ldots,a_{n})\in \mathbb{F}_{q}^{n}\}$ is a basis of $\mathbb{C}^{q^{n}}$ and $v_{\bm{a}}\in \mathbb{C}$. For any two $n$-qubits $|\bm{u}\rangle=\sum_{\bm{a} \in \mathbb{F}_{q}^{n}}u_{\bm{a}}|\bm{a}\rangle$ and $|\bm{v}\rangle=\sum_{\bm{a} \in \mathbb{F}_{q}^{n}}v_{\bm{a}}|\bm{a}\rangle$, their \emph{Hermitian inner product} is defined as
$$\langle \bm{u}|\bm{v}\rangle=\sum_{\bm{a}\in\mathbb{F}_{q}^{n}}u_{\bm{a}}\bar{v}_{\bm{a}} \in \mathbb{C},$$
where $\bar{v}_{\bm{a}}$ is the conjugate of $v_{\bm{a}}$ in the complex field. $|\bm{u}\rangle$ and $|\bm{v}\rangle$ are called distinguishable if $\langle \bm{u}|\bm{v}\rangle=0$.

The quantum errors in a quantum system are some
unitary operators, usually denoted $X$ and $Z$. The actions of $X(\bm{a})$ and $Z(\bm{b})$ on the basis $|\bm{v}\rangle\in \mathbb{C}^{q^{n}}$ are defined as
$$X(\bm{a})|\bm{v}\rangle=|\bm{v}+\bm{a}\rangle \;\textnormal{and} \; Z(\bm{b})|\bm{v}\rangle=\omega_{p}^{tr(\langle\bm{b},\bm{v}\rangle)}|\bm{v}\rangle$$
respectively, where $tr(\cdot)$ is the trace function from $\mathbb{F}_{q}$ to $\mathbb{F}_{p}$ ($p$ is the characteristic  of $\mathbb{F}_{q}$) and $\omega_{p}$ is a complex primitive $p$-th root of unity. The set of error operators on $\mathbb{C}^{q^{n}}$ is defined as
$$ E_{n}=\{ \omega_{p}^{i}X(\bm{a})Z(\bm{b}): 0\leq i\leq p-1,  \; \bm{a}=(a_{1},\ldots,a_{n}), \bm{b}=(b_{1},\ldots,b_{n}) \in \mathbb{F}_{q}^{n}\}.$$
For any error $E=\omega_{p}^{i}X(\bm{a})Z(\bm{b})$, we define
the \emph{quantum weight} of $E$ by
$$w_{Q}(E)=\sharp \{i:(a_{i},b_{i})\neq (0,0)\}.$$

Let $E_{n}(l)=\{E \in E_{n}:w_{Q}(E)\leq l\}$ be the set of error operators with weight no more than $l$. A quantum code with length $n$ is defined as a subspace of $\mathbb{C}^{q^{n}}$. A quantum code $Q$ can detect a quantum error $E$ if and only if for any $|\bm{u}\rangle,|\bm{v}\rangle \in Q$ with $\langle\bm{u}|\bm{v}\rangle=0$,  we have $\langle\bm{u}|E|\bm{v}\rangle=0$. The quantum code $Q$ has \emph{minimum distance} $d$ if $d$ is the largest integer such that for any  $|\bm{u}\rangle,|\bm{v}\rangle \in Q$ with $\langle\bm{u}|\bm{v}\rangle=0$ and $E\in E_{n}(d-1)$, we have $\langle\bm{u}|E|\bm{v}\rangle=0$. We denote by $((n,K,d))_{q}$ or $[[n, k, d]]_{q}$ a quantum code $Q$ of length $n$, dimension $K$ and minimum distance $d$, where $k=\log _{q}K$. A quantum code $Q$ is called a \emph{pure quantum} code if and only if for any $|\bm{u}\rangle,|\bm{v}\rangle \in Q$ and $E\in E_{n}$ with $1\leq w_{Q}(E)\leq d-1$ ($d$ is the minimum distance), we have $\langle\bm{u}|E|\bm{v}\rangle=0$.

Quantum codes can be constructed by using character theory of finite abelian groups from CSS construction. Suppose $S$ is an abelian subgroup of $E_{n}$,
\emph{quantum stabilizer} code $C(S)$ associated with $S$ is defined as
$$C(S)=\{|\psi\rangle\in\mathbb{C}^{q^{n}}: E|\psi\rangle=|\psi\rangle,\forall E \in S\}.$$
Quantum stabilizer codes are analogues of classical additive codes, and classical linear codes with certain orthogonality can be used to construct quantum stabilizer codes. More results and details can be found in \cite{1,2,3,4,16}.

In the asymmetric quantum channels, we require quantum codes to have different error-correcting capabilities for handling different types of errors. More research of quantum codes in the asymmetric quantum channels can be found in \cite{17,18,19}.

For any error $E=\omega_{p}^{i}X(\bm{a})Z(\bm{b})$, we define $X$-weight $w_{X}(E)$ and $Z$-weight $w_{Z}(E)$ as $w_{X}(E)=\sharp \{i:a_{i}\neq 0\}$ and $w_{Z}(E)=\sharp \{i:b_{i}\neq 0\}$, respectively. A quantum code $Q$ is called an \emph{asymmetric entanglement-assisted quantum error-correcting} code (AEAQECC) with parameters $[[n, k, d_{z}/d_{x}, c]]_{q}$ if $Q$ encodes $k$ logical qubits into $n$ physical qubits with the help of $c$ copies of maximally entangled states, which can detect all phase-flip errors ($Z$-errors) up to $d_{z}-1$ and all qubit-flip errors ($X$-errors) up to $d_{x}-1$. Namely, if $\langle\bm{u}|\bm{v}\rangle=0$ for $|\bm{u}\rangle$,$|\bm{v}\rangle \in Q$, then $\langle\bm{u}|E|\bm{v}\rangle=0$ for any $E \in E_{n}$ such that $w_{X}(E)\leq d_{x}-1$ and $w_{Z}(E)\leq d_{z}-1$. In \cite{12}, Galindo et al. gave the following construction from CSS construction.
\begin{theorem}\cite{12}
Let $C_{i}$ be an $[n,k_{i}]_{q}$ linear code with generator matrices $G_{i}$ for $i=1,2$. Set $d_{z}=wt\big(C_{1}^{\bot}\setminus (C_{2}\cap C_{1}^{\bot})\big)$ and
$d_{x}=wt\big(C_{2}^{\bot}\setminus (C_{1}\cap C_{2}^{\bot})\big)$. Then there exists an AEAQECC with parameters $[[n,n-k_{1}-k_{2}+c,d_{z}/d_{x},c]]_{q}$, where $c=\textnormal{rank}(G_{1}G_{2}^{T})=\dim(C_{1})-\dim(C_{1}\cap C_{2}^{\bot})$ is the minimum required of  maximally entangled states.
\end{theorem}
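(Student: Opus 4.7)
The plan is to embed the classical data into the entanglement-assisted stabilizer framework: let the rows of $G_1$ index $X$-type Pauli operators and the rows of $G_2$ index $Z$-type Pauli operators on the $n$ message qudits, and use $c$ maximally entangled pairs to absorb the commutation defects. Concretely, I would form the tentative generators $S_i = X(\bm g_{1,i})\otimes I_c$ and $T_j = Z(\bm g_{2,j})\otimes I_c$ on $n+c$ qudits, where the extra $c$ slots hold the sender's halves of Bell pairs $\tfrac{1}{\sqrt q}\sum_{x\in\mathbb F_q}\lvert x\rangle\lvert x\rangle$. Since $X(\bm a)$ and $Z(\bm b)$ commute iff $\langle\bm a,\bm b\rangle=0$, the commutation defects among the $S_i,T_j$ are governed exactly by the matrix $M=G_1 G_2^T$, whose rank is $c$.

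The core construction step is to replace each defective pair of generators by an augmented version with nontrivial action on a distinct ebit slot, so that the resulting group $\mathcal S$ becomes abelian. After suitable invertible row operations on $G_1$ and $G_2$ (which preserve $C_1$, $C_2$ and the rank of $M$), one may assume the nonzero block of $M$ is the $c\times c$ identity; the defective pairs can then be fixed by appending single-qudit $X$ or $Z$ on disjoint ebit positions, whose contribution to the symplectic form cancels the classical defect exactly. This yields $k_1+k_2$ independent commuting generators on $n+c$ qudits, and the standard EAQECC dimension count gives a logical subspace of dimension $q^{n+c-(k_1+k_2)}=q^{n-k_1-k_2+c}$, matching the claimed $k$.

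For the distance analysis, I would separately examine pure $Z$- and pure $X$-errors acting on the message register. A $Z$-error $Z(\bm e)\otimes I_c$ commutes with every $S_i$ iff $\bm e\in C_1^{\bot}$ (and trivially with every $T_j$), so undetectability forces $\bm e\in C_1^{\bot}$; the standard EAQECC symplectic analysis then shows that such an error acts trivially on the logical subspace iff $\bm e\in C_2\cap C_1^{\bot}$. Hence the minimum weight of a nontrivial undetectable $Z$-error is $wt\bigl(C_1^{\bot}\setminus(C_2\cap C_1^{\bot})\bigr)=d_z$, and the symmetric argument (swapping the roles of $C_1,C_2$ and of $X,Z$) yields $d_x=wt\bigl(C_2^{\bot}\setminus(C_1\cap C_2^{\bot})\bigr)$.

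The main technical subtlety I anticipate is the abelianization in step two: I must prove that $c$ ebits are both necessary and sufficient to fix every commutation defect. This amounts to a symplectic Gram--Schmidt on $M=G_1 G_2^T$: one simultaneously chooses bases of the row spaces of $G_1$ and $G_2$ that bring $M$ into the canonical form $I_c\oplus 0$ while preserving each code as a row span. Once this normal form is reached, the commutation bookkeeping and the dimension-and-distance arguments above become essentially routine symplectic linear algebra.
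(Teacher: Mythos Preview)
The paper does not prove this theorem at all: it is quoted verbatim from \cite{12} (Galindo--Hernando--Matsumoto--Ruano) and used as a black box, so there is no ``paper's own proof'' against which to compare your proposal.

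That said, your sketch is the standard route and is essentially how the result is established in \cite{12} (itself built on the Brun--Devetak--Hsieh entanglement-assisted stabilizer formalism). The reduction of $M=G_1G_2^{T}$ to the normal form $I_c\oplus 0$ by independent row operations on $G_1$ and $G_2$, followed by appending single-qudit $X$/$Z$ on $c$ ebit registers to repair the $c$ anticommuting pairs, is precisely the symplectic Gram--Schmidt used there; the dimension count $q^{\,n-k_1-k_2+c}$ then follows immediately. Your distance analysis is also along the right lines, with one point worth tightening: after augmentation the $T_j$ carry nontrivial $Z$-action on the ebit slots, so a pure $Z$-error $Z(\bm e)\otimes I_c$ that lies in $C_1^{\bot}$ acts as a \emph{stabilizer element} (hence trivially on the code) iff it equals a product of the $T_j$'s whose ebit factors cancel---equivalently iff $\bm e$ lies in the span of those rows of $G_2$ that were \emph{not} augmented, which after the normal-form reduction is exactly $C_2\cap C_1^{\bot}$. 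Spelling out that last identification is the only place your outline is a bit telegraphic; otherwise the argument is sound and matches the original source.
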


Let $Q$ be an AEAQECC with parameters $[[n,k,d_{z}/d_{x},c]]_{q}$ (where $k=n-k_{1}-k_{2}+c$) constructed by linear codes $C_{1}$ and $C_{2}$ with parameters $[n,k_{1}]_{q}$ and $[n,k_{2}]_{q}$ respectively. Then $Q$ is called a \emph{pure} AEAQECC if $d_{z}=wt\big(C_{1}^{\bot}\setminus (C_{2}\cap C_{1}^{\bot})\big)=wt(C_{1}^{\bot})$ and
$d_{x}=wt\big(C_{2}^{\bot}\setminus (C_{1}\cap C_{2}^{\bot})\big)=wt(C_{2}^{\bot})$. For a pure AEAQECC $Q$, by the Singleton bound of classical linear codes, we have $ d_{z}=wt(C_{1}^{\bot})\leq k_{1}+1$  and $d_{x}= wt(C_{2}^{\bot})\leq k_{2}+1$. It follows that
$$d_{x}+d_{z}\leq wt(C_{1}^{\bot})+wt(C_{2}^{\bot})\leq n-(n-k_{1}-k_{2}+c)+c+2=n-k+c+2.$$
Then $Q$ is called a pure \emph{maximum distance separable }(MDS) AEAQECC if the parameters satisfy $d_{x}+d_{z}=n-k+c+2$. In this paper, our purpose is to construct pure MDS AEAQECCs for all possible parameters.
\section{All possible linear $l$-intersection pairs of MDS codes}
In \cite{14}, Guenda et al. gave some results about linear $l$-intersection pairs of MDS codes over $\mathbb{F}_{q}$ with length $n\leq q+1$. In this section, we complement their results and give all possible linear $l$-intersection pairs of MDS codes over $\mathbb{F}_{2^{m}}$ with length $n=2^{m}+2\geq6$. In summary, linear $l$-intersection pairs of MDS codes for all possible parameters will be given in this section.

Let's begin with some basic results about MDS codes. Trivial families of MDS codes include the vector space $\mathbb{F}_{q}^{n}$, the codes equivalent to the
$[n,1,n]_{q}$ repetition codes and their duals $[n,n-1,2]_{q}$ codes for $n \geq 2$. The MDS Conjecture is given as follows.
\begin{conjecture}[MDS Conjecture]
If there is a nontrivial $[n,k,d]_{q}$ MDS code, then $n \leq q+1$, except when $q$ is even and $k=3$ or $k=q-1$ in which case $n\leq q+2$.
\end{conjecture}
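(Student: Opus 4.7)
The statement is the celebrated MDS Conjecture, a major open problem in finite geometry and coding theory; consequently no complete proof is known in general, and the paper itself uses it as a working hypothesis rather than as an established theorem. The standard reformulation is geometric: an $[n,k]_q$ MDS code corresponds to an $n$-arc in the projective space $\mathrm{PG}(k-1,q)$, i.e.\ a set of $n$ points no $k$ of which lie on a common hyperplane. Under this dictionary the conjecture asserts that, outside the exceptional situation in which $q$ is even and $k\in\{3,q-1\}$, the maximum size of an arc is $q+1$, attained by a normal rational curve (equivalently, the GRS codes from Section~2). The plan is therefore to transport the question to arcs and attack it by induction on $k$, combined with a combinatorial study of how an arc meets hyperplanes and how it behaves under ``derivation'' through a fixed point.

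First I would dispose of the small cases directly. For $k=2$ an arc is just a set of distinct points on a projective line, so $n\le q+1$ is immediate. For $k=3$ the problem is about arcs in $\mathrm{PG}(2,q)$, where Segre's theorem (1955) says that for odd $q$ every $(q+1)$-arc is a conic, and a tangent-counting argument then shows that a $(q+2)$-arc (a hyperoval) can exist only in characteristic two, producing the first exceptional family. Derivation -- projecting the arc from one of its own points into $\mathrm{PG}(k-2,q)$ -- reduces the question for $k$ to the question for $k-1$, and duality between arcs in $\mathrm{PG}(k-1,q)$ and $\mathrm{PG}(n-k-1,q)$ (via the dual MDS code) lets one trade $k$ for $n-k$. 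Iterating these two moves is what forces the second exception $k=q-1$ for $q$ even, and delivers the bound $n\le q+1$ in every remaining case where the induction closes up.

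The hard part, and the reason the conjecture remains open in general, is controlling arcs in the middle range $3<k<q-1$ when $q$ is not prime: induction plus duality alone is not strong enough to break the existence of a hypothetical large arc. The most successful known approach (Ball, 2012) deals with the case that $q$ is prime by encoding an $n$-arc as the common zero set of a carefully chosen family of bivariate polynomials built from the dual code via Hasse derivatives, and then exploiting an algebraic-independence property that survives only when $\mathbb{F}_q$ has no proper subfield. Extending this argument to arbitrary prime powers would require a new input replacing that rigidity, and it is precisely this gap which prevents a full proof. For the purposes of the present paper one therefore adopts the conjecture as an assumption, invoking the unconditionally known cases ($k\le 2$, $k=3$, $k\ge q-1$, and all $k$ when $q$ is prime) wherever possible.
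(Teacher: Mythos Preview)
Your assessment is correct: the paper states the MDS Conjecture precisely as a conjecture and offers no proof whatsoever; it is invoked only as a working hypothesis (``assuming the validity of the MDS Conjecture'') to delimit the range of lengths $n$ for which MDS codes can exist. Your survey of the geometric reformulation via arcs in $\mathrm{PG}(k-1,q)$, Segre's theorem, derivation/duality, and Ball's prime-field result is accurate and informative, but it goes well beyond anything the paper attempts---the paper simply records the conjecture and moves on.
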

Note that when $C_{1}$ is $\mathbb{F}_{q}^{n}$, it's easy to find that $\dim({C_{1} \cap C_{2}})= \dim(C_{2})$. Therefore, we assume that the dimensions of $C_{1}$ and $C_{2}$ are both less than $n$ in this paper.

First, let's recall some results in \cite{14}.  Guenda et al. used the definition of the extended GRS codes as follows. For polynomials $a(x)=a_{0}+a_{1}x+\cdots+a_{t}x^{t}$ and $b(x)=b_{0}+b_{1}x+\cdots+b_{t}x^{t}$ in  $\mathbb{F}_{q}[x]$ with $b_{t}\neq 0$, let $r(x)=\frac{a(x)}{b(x)}$ be a rational function. The evaluation $r(\infty)$ is defined to be $\frac{a_{t}}{b_{t}}$, thus $r(\infty)=0$ if and only if $\deg(a(x))<\deg(b(x))$. Let  $\mathcal{A}=\{a_{1},a_{2},\ldots,a_{n-1} \}$ be a subset of $\mathbb{F}_{q}$ with $n-1$ distinct elements, $\bm{v} =(v_{1},v_{2},\ldots,v_{n})$ where $v_{1},v_{2},\ldots,v_{n}$ are nonzero elements in $\mathbb{F}_{q}$ and $P(x)$ be a nonzero polynomial in $\mathbb{F}_{q} [x]$ with $\deg(P(x))\leq n$ such that $P(a_{i})\neq0$ for all $i=1,\ldots, n-1$. Then the extended GRS code is defined as
$$ GRS_{\infty}(\mathcal{A},P(x),\bm{v})=\{(\frac{v_{1}f(a_{1})}{P(a_{1})},\ldots ,\frac{v_{n-1}f(a_{n-1})}{P(a_{n-1})}, v_{n}(\frac{xf}{P})(\infty)) :$$
$$\;\;\;\;\;\;\;\;\;\;\;\;\;\;\;\;\;\;\;\;\;\;\;\;\;\;\;\;\;\;\; f(x) \in \mathbb{F}_{q}[x], \deg (f(x)) < \deg (P(x)) \}  .$$
It is well-known that $GRS_{\infty}(\mathcal{A},P(x),\bm{v})$ is an MDS code and so is its dual code. Then Guenda et
al. \cite{14} gave the following lemma for the existence of linear $l$-intersection pairs
of MDS codes.
\begin{lemma}\cite[Theorem 3.2 and Corollary 3.3]{14}
Let $q$ be a prime power and $n, k_{1}, k_{2}, l$ be non-negative integers such
that $k_{1} \leq n \leq q +1$ and $k_{2} \leq n$. For a subset $\mathcal{A} \subseteq \mathbb{F}_{q}$ of size $n-1$, if there exist polynomials $P(x)$, $Q(x)$
and $L(x)$ in $\mathbb{F}_{q} [x]$ satisfying the following
conditions:\\
(i) $\deg(P(x))=k_{1}$, $\deg(Q(x))=k_{2}$ and $\deg(L(x))=l$,\\
(ii) $\gcd(P(x), Q(x))=L(x)$,\\
(iii) $\gcd(P(x)Q(x),\prod_{a \in \mathcal{A}}(x-a))=1$,\\
(iv) $\deg(P(x))+\deg(Q(x)) \leq n+\deg(L(x))$,\\
then $GRS_{\infty}(\mathcal{A},P(x),\bm{v})$ and $GRS_{\infty}(\mathcal{A},Q(x),\bm{v})$ $($where $\bm{v} \in (\mathbb{F}_{q}^{\ast})^{n}$$)$ form a linear $l$-intersection pair of two MDS codes with parameters $[n, k_{1}, n-k_{1}+1]_{q}$ and $[n, k_{2}, n-k_{2}+1]_{q}$.
\end{lemma}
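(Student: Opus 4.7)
The plan is to compute $\dim(C_1 \cap C_2)$ directly, where $C_1 = GRS_{\infty}(\mathcal{A}, P(x), \bm{v})$ and $C_2 = GRS_{\infty}(\mathcal{A}, Q(x), \bm{v})$; the fact that each $C_i$ is MDS with the claimed parameters is the already-cited property of extended GRS codes, and condition (iii) ensures that both codes are well defined because $P(a_i), Q(a_i) \neq 0$ for all $i$. Set $M(x) = \prod_{a \in \mathcal{A}}(x-a)$, of degree $n-1$. A codeword of $C_1$ produced by $f$ (with $\deg f < k_1$) agrees with a codeword of $C_2$ produced by $g$ (with $\deg g < k_2$) precisely when two polynomial conditions hold: matching the first $n-1$ coordinates gives $M(x) \mid f(x)Q(x) - g(x)P(x)$, and matching the infinity coordinate gives $f_{k_1-1}Q_{k_2} = g_{k_2-1}P_{k_1}$, where the subscripts denote the relevant leading coefficients. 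The latter is exactly the statement that the $x^{k_1+k_2-1}$-coefficient of $f(x)Q(x)-g(x)P(x)$ vanishes, so $\deg(fQ-gP) \leq k_1+k_2-2$.

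Writing $f(x)Q(x) - g(x)P(x) = M(x)h(x)$ and combining this degree bound with $\deg M = n-1$ and condition (iv) yields $\deg h \leq k_1+k_2-n-1 \leq l-1$. Condition (ii) gives $L \mid P$ and $L \mid Q$, hence $L \mid Mh$, and condition (iii) forces $\gcd(L, M)=1$, so $L \mid h$. Since $\deg L = l > \deg h$, we must have $h \equiv 0$, i.e.\ $f(x)Q(x) = g(x)P(x)$. Factoring $P = LP'$ and $Q = LQ'$ with $\gcd(P',Q')=1$ reduces this to $fQ' = gP'$, and coprimality forces $P' \mid f$. So $f = P't$ and $g = Q't$ for some $t \in \mathbb{F}_q[x]$ with $\deg t \leq l-1$, giving an at most $l$-dimensional family of admissible pairs $(f,g)$.

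A direct substitution shows that the codeword parametrized by such a $t$ equals
\[
\bigl(v_1 t(a_1)/L(a_1),\ldots, v_{n-1} t(a_{n-1})/L(a_{n-1}),\; v_n\cdot(xt/L)(\infty)\bigr),
\]
which is exactly a codeword of the extended GRS code $GRS_{\infty}(\mathcal{A}, L(x), \bm{v})$. This code is MDS of dimension $\deg L = l$, so the map from $t$ to codewords is injective and its image is all of $C_1 \cap C_2$, giving $\dim(C_1 \cap C_2) = l$. I expect the most delicate step to be the bookkeeping at the infinity coordinate: one must verify that the ``evaluation at $\infty$'' in the definition of $GRS_{\infty}$ really matches the top coefficient of $fQ-gP$, and that after extracting the common factor $L$ the infinity coordinate is still computed correctly by $(xt/L)(\infty)$. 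Once this is in place, the dimension count is driven cleanly by the coprimality in (iii) and the degree slack in (iv).
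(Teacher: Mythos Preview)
The paper states this lemma as a citation from \cite{14} and does not reproduce any proof, so there is no in-paper argument to compare against. Your direct computation of $C_1\cap C_2$ is correct and is exactly the natural route: the two key observations---that the infinity-coordinate condition is equivalent to killing the top coefficient of $fQ-gP$, and that conditions (iii)--(iv) force the quotient $h=(fQ-gP)/M$ to be divisible by $L$ yet of degree strictly below $\deg L$---are handled cleanly, and the identification of the intersection with $GRS_\infty(\mathcal{A},L,\bm{v})$ follows. One small check worth making explicit in a final write-up is the infinity bookkeeping you flagged: with $f=P't$ and $P=LP'$ one has $f_{k_1-1}=P'_{k_1-l}\,t_{l-1}$ and $P_{k_1}=L_l\,P'_{k_1-l}$, so indeed $(xf/P)(\infty)=t_{l-1}/L_l=(xt/L)(\infty)$, confirming the last coordinate matches that of $GRS_\infty(\mathcal{A},L,\bm{v})$.
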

In \cite[pp. 92-93]{20}, the number of monic irreducible polynomials of degree $n$ over $\mathbb{F}_{q}$ is given as follows.
$$N_{q}(n)=\frac{1}{n}\sum_{d|n}\mu(d)q^{n/d},$$
where $\mu$ is the\emph{ M$\ddot{o}$bius function} defined by
$$\mu(m)=\begin{cases}
1\;\;\;\;\; \;\;\;\;\;\;\textnormal{if  m=1,}\\
(-1)^{r}\;\;\;\; \textnormal{if $m$ is a product of $r$ distinct primes,}\\
0\;\;\;\;\;\;\;\;\;\; \textnormal{ if $p^{2}|m$ for some prime $p$.}
\end{cases}$$
For all prime powers $q\geq3$, it's easy to find that $N_{q}(n)\geq 3$. Then Guenda et al. \cite{14} gave the following proposition to construct linear $l$-intersection pairs of MDS codes over $\mathbb{F}_{q}$ with length up to  $q+1$.
\begin{proposition}\cite[Proposition 3.1]{14}
Let $q\geq3$ be a prime power and $n, k_{1}, k_{2}, l$ be non-negative integers such
that $k_{1} \leq n-1 \leq q$ and $k_{2} \leq n-1$. If $ l \leq \min\{k_{1}, k_{2}\}$, then there exists a linear $l$-intersection
pair of MDS codes with parameters $[n, k_{1},n-k_{1}+1]_{q}$ and $[n, k_{2}, n-k_{2}+1]_{q}$.
\end{proposition}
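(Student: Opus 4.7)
The plan is to invoke Lemma 6 and construct polynomials $P(x), Q(x), L(x) \in \mathbb{F}_{q}[x]$ satisfying its four conditions. I fix $\mathcal{A} \subseteq \mathbb{F}_{q}$ of size $n - 1$ (possible because $n - 1 \leq q$) and set $B := \mathbb{F}_{q} \setminus \mathcal{A}$, so $|B| = q - n + 1$. Then I write $P = L R_{1}$ and $Q = L R_{2}$ with $\deg L = l$, $\deg R_{1} = k_{1} - l$, $\deg R_{2} = k_{2} - l$. Under this ansatz, condition (ii) $\gcd(P, Q) = L$ is equivalent to $\gcd(R_{1}, R_{2}) = 1$, condition (iii) becomes ``none of $L, R_{1}, R_{2}$ has a root in $\mathcal{A}$'', and condition (iv) reduces to $k_{1} + k_{2} - l \leq n$, i.e.\ the implicit lower bound $l \geq k_{1} + k_{2} - n$ coming from Lemma 2.

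To build $L, R_{1}, R_{2}$, I select pairwise coprime monic polynomials of the prescribed degrees, drawing their irreducible factors from two pools: (a) the linear factors $x - b$ with $b \in B$, of which there are $|B|$; and (b) the monic irreducible polynomials over $\mathbb{F}_{q}$ of degree $\geq 2$, of which there are at least $N_{q}(d) \geq 3$ for every $d \geq 2$ by the formula for $N_{q}(d)$ recalled just above (using $q \geq 3$). Factors from pool (b) have no roots in $\mathbb{F}_{q}$, so they automatically avoid $\mathcal{A}$. I then partition the chosen irreducible factors into three groups whose degree sums are $l$, $k_{1} - l$, $k_{2} - l$ respectively, ensuring that no irreducible is shared between $R_{1}$ and $R_{2}$, and assemble $L, R_{1}, R_{2}$ as the products within each group. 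For any $\bm{v} \in (\mathbb{F}_{q}^{\ast})^{n}$, Lemma 6 then yields the extended GRS codes $GRS_{\infty}(\mathcal{A}, P, \bm{v})$ and $GRS_{\infty}(\mathcal{A}, Q, \bm{v})$, which are MDS with the required parameters and form a linear $l$-intersection pair.

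I expect the main obstacle to be the combinatorial bookkeeping when $|B|$ is small (i.e.\ $n$ close to $q + 1$) and $\min\{k_{1} - l, k_{2} - l\}$ is also small, so that $R_{1}$ or $R_{2}$ is forced to be of low degree with very few linear factors available. In such tight regimes one may need to allocate a single irreducible polynomial of exactly the required small degree (the existence of which is guaranteed by $N_{q}(d) \geq 3$), or to shift a factor from $L$ over to an $R_{i}$ to make the degrees fit. Verifying that these adjustments can always be carried out while simultaneously keeping $R_{1}, R_{2}$ coprime and exhausting the total degree budget $k_{1} + k_{2} - l$ is the only delicate step; the remaining ``generic'' ranges of the parameters follow straightforwardly from the construction above.
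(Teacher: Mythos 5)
Your overall strategy is exactly the one used in \cite{14} and reproduced in the paper as the proof of Proposition 1: factor $P=LR_{1}$, $Q=LR_{2}$ with prescribed degrees $l$, $k_{1}-l$, $k_{2}-l$, keep $R_{1},R_{2}$ coprime, make everything avoid roots in $\mathcal{A}$, and invoke the $GRS_{\infty}$ criterion (Lemma 3 of the paper). However, the step you flag as ``delicate'' is a genuine gap, and it is precisely the gap the paper isolates in Remark 1. A monic irreducible polynomial of degree $1$ is $x-b$ and always has a root in $\mathbb{F}_{q}$, so the only usable degree-one factors are those with $b\in B=\mathbb{F}_{q}\setminus\mathcal{A}$. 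When $n=q+1$ we have $B=\emptyset$, so if any of $l$, $k_{1}-l$, $k_{2}-l$ equals $1$ the corresponding factor simply does not exist. When $n=q$ we have $|B|=1$, so if $k_{1}-l=k_{2}-l=1$ the two coprime degree-one factors $R_{1},R_{2}$ cannot both avoid $\mathcal{A}$. Neither of your proposed repairs works: the bound $N_{q}(1)=q$ is irrelevant because none of those irreducibles avoids $\mathbb{F}_{q}$, and ``shifting a factor from $L$ over to an $R_{i}$'' changes the degrees and hence changes $l$ or $k_{i}$, which are fixed by the statement.

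Worse, the statement cannot be rescued in full generality: for $(n,k_{1},k_{2},l)=(q+1,1,2,1)$ (or its symmetric counterpart) one would need a $[q+1,1,q+1]_{q}$ code contained in a $[q+1,2,q]_{q}$ MDS code, but by the MDS weight distribution (Theorem 1) the latter has $A_{q+1}=(q-1)\bigl(q-{q\choose 1}\bigr)=0$, so it has no full-weight codeword. This is why Theorem 8 of the paper explicitly excludes these two parameter tuples. The remaining exceptional cases --- $n=q$ with $k_{1}-l=k_{2}-l=1$, and $n=q+1$ with $1\in\{l,k_{1}-l,k_{2}-l\}$ apart from the excluded tuples --- are true but require genuinely different constructions, namely explicit generator matrices of (extended) GRS codes with a direct computation of the intersection; that is the content of Theorems 3 and 4.
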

\begin{proof}\cite[Proof of Proposition 3.1]{14}
Assume that $ l \leq \min\{k_{1}, k_{2}\}$. By  $N_{q}(n)\geq 3$, there exist monic irreducible polynomials
$f (x), L(x)$ and $h(x)$ in $\mathbb{F}_{q}$ of degrees $k_{1}-l$, $l$ and $k_{2}-l$, respectively, and the polynomial is set to be $1$ if the degree is zero. For any subset $\mathcal{A}\subseteq\mathbb{F}_{q}$ of size $n-1$, let $P(x)=f (x)L(x)$ and $Q(x)=h(x)L(x)$ so then
$P(x), Q(x)$ and $ L(x)$ satisfy the conditions in Lemma 3. Hence, $GRS_{\infty}(\mathcal{A},P(x),\bm{v})$ and $GRS_{\infty}(\mathcal{A},Q(x),\bm{v})$ form a linear $l$-intersection pair and they have parameters $[n, k_{1}, n-k_{1}+1]_{q}$ and $[n, k_{2}, n-k_{2}+1]_{q}$ respectively. \qed
\end{proof}
\begin{remark}
However, the above proof of \cite{14} is incomplete. We find that the above proof does not work in the following two cases:

\textbf{Case 1}: $n=q$, $k_{1},k_{2}\leq q-1$, $k_{1}-l=k_{2}-l=1$. From the above proof, if $n=q$, then $\mathcal{A}=\mathbb{F}_{q}\setminus \{\alpha\}$ for some $\alpha \in \mathbb{F}_{q}$. Hence, there exists only one monic irreducible polynomial $f(x)=x-\alpha\in\mathbb{F}_{q}[x]$ of degree $1$ satisfying  $\gcd(f(x),\prod_{a \in \mathcal{A}}(x-a))=1$. But in the case of $k_{1}-l=k_{2}-l=1$, we need two monic irreducible polynomials $f(x),h(x)$ of degree $1$ satisfying $\gcd(f(x),h(x))=1$ and $\gcd(f(x)h(x),\prod_{a \in \mathcal{A}}(x-a))=1$, which leads to a contradiction.

\textbf{Case 2}: $n=q+1$, $k_{1},k_{2}\leq q$, $1 \in \{ l, k_{1}-l, k_{2}-l \}$. From the above proof, if $n=q+1$, then $\mathcal{A}=\mathbb{F}_{q}$. Hence, there does not exist monic irreducible polynomial $f(x)\in\mathbb{F}_{q}[x]$ of degree $1$ satisfying $\gcd(f(x),\prod_{a \in \mathcal{A}}(x-a))=1$. Therefore, if $n=q+1$ and $1 \in \{ l, k_{1}-l, k_{2}-l \}$, the above proof does not work.
\end{remark}
\subsection{Complement of linear $l$-intersection pairs of MDS codes over $\mathbb{F}_{q}$ with length $n\leq q+1$}
In the following, we give the constructions of the two cases in Remark 1.
\begin{theorem}
Let $q\geq3$ be a prime power and $k_{1}, k_{2}, l$ be non-negative integers such that $k_{1}=k_{2}=l+1\leq q-1$. Then there exists a linear $l$-intersection pair of two MDS codes with the same parameters $[q,l+1,q-l]_{q}$.
\end{theorem}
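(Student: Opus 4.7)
The plan is to construct the pair directly, bypassing Lemma 3 entirely since Remark 1 already explains why its polynomial recipe cannot meet the required coprimality in this range. Both target codes have dimension $l+1$ and I need the intersection to have codimension exactly one in each, so the natural device is to take two GRS codes that differ in exactly one column multiplier, breaking the symmetry in a single coordinate.

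Concretely, I would enumerate $\mathbb{F}_q = \{a_1, a_2, \ldots, a_q\}$ and pick any $\omega \in \mathbb{F}_q^{\ast} \setminus \{1\}$, which exists because $q \geq 3$. Then I would set
\[
C_1 = GRS_{l+1}(\mathbb{F}_q, (1,1,\ldots,1))
\quad\text{and}\quad
C_2 = GRS_{l+1}(\mathbb{F}_q, (\omega, 1,\ldots,1)).
\]
Both are $[q,\, l+1,\, q-l]_q$ MDS codes by the standard GRS construction, so the parameter part of the claim is immediate.

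To identify the intersection, I would take an arbitrary element of $C_1 \cap C_2$ and write it both as $(f(a_1),\ldots,f(a_q))$ with $\deg f \leq l$ and as $(\omega g(a_1),\, g(a_2),\ldots,g(a_q))$ with $\deg g \leq l$. Comparing the last $q-1$ coordinates shows $f-g$ vanishes at $q-1$ distinct points, and since the hypothesis $l+1 \leq q-1$ gives $\deg(f-g) \leq l < q-1$, polynomial interpolation forces $f = g$. The first coordinate then yields $(1-\omega)f(a_1) = 0$, so $f(a_1) = 0$. Conversely, every polynomial of degree $\leq l$ vanishing at $a_1$ produces a codeword in both codes by taking $g = f$, and that space equals $(x-a_1)\cdot \mathbb{F}_q[x]_{\leq l-1}$, which is $l$-dimensional. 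Hence $\dim(C_1 \cap C_2) = l$.

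I do not foresee a serious obstacle here: the only real step is the idea of perturbing a single column multiplier instead of trying to factor $P(x)$ and $Q(x)$ differently, and after that the dimension count falls out of one interpolation argument. The only numerical constraint to track is $l \leq q-2$, which is exactly the standing hypothesis $l+1 \leq q-1$.
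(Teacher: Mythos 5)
Your proof is correct, and it takes a genuinely different route from the paper's. The paper realizes the two codes as $GRS_{l+1}(\mathbb{F}_q,\bm{1})$ and the \emph{extended} code $GRS_{l+1}(\mathbb{F}_q^{\ast}\cup\infty,\bm{1})$, i.e.\ it swaps the evaluation at $0$ for the point at infinity; it then splits into the cases $l>0$ and $l=0$, exhibits an explicit $l$-dimensional subspace sitting inside the intersection, and caps the dimension from above by pointing to the all-ones word, which lies in $C_1$ but not $C_2$. You instead keep both codes as ordinary GRS codes on all of $\mathbb{F}_q$ and perturb a single column multiplier by $\omega\neq 1$, then identify the intersection \emph{exactly} as the evaluations of the polynomials of degree at most $l$ vanishing at $a_1$, via the interpolation bound $\deg(f-g)\leq l\leq q-2<q-1$. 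Your version buys uniformity (the $l=0$ case needs no separate treatment, since $f(a_1)=0$ with $f$ constant forces $f=0$) and a complete description of the intersection rather than a two-sided dimension squeeze; the paper's version has the mild advantage of using only the all-ones multiplier vector and of matching the extended-GRS machinery it reuses elsewhere in Section~3. All the hypotheses you need are in place: $\omega$ exists because $q\geq 3$, the evaluation map on $\mathbb{F}_q[x]_{\leq l}$ is injective because $l\leq q-2<q$, and both codes are $[q,l+1,q-l]_q$ MDS by the standard GRS facts quoted in Section~2.
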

\begin{proof}
Write $\mathbb{F}_{q}=\{a_{1},\ldots,a_{q-1},0\}$ and $\mathbb{F}_{q}^{\ast}=\{a_{1},\ldots,a_{q-1}\}$, then we divide our proof into two cases.

\textbf{Case 1}: When $0<l\leq q-2$. Let $C_{1}=GRS_{l+1}(\mathbb{F}_{q},\bm{1})$ with generator matrix $G_{1}$ and $C_{2}=GRS_{l+1}(\mathbb{F}_{q}^{\ast}\cup \infty,\bm{1})$ with generator matrix $G_{2}$, where
$$G_{{1}}=\left(
  \begin{array}{cccc}
    1 &  \ldots & 1 &  1 \\
       a_{1} & \ldots  & a_{q-1} & 0 \\
     \vdots & \ddots & \vdots & \vdots \\
     a_{1}^{l} & \ldots  & a_{q-1}^{l} & 0 \\
  \end{array}
\right)
,
G_{{2}}=\left(
  \begin{array}{cccc}
    1 &  \ldots & 1 &  0 \\
       a_{1} & \ldots  & a_{q-1} & 0 \\
     \vdots & \ddots & \vdots & \vdots \\
     a_{1}^{l} & \ldots  & a_{q-1}^{l} & 1 \\
  \end{array}
\right)
.$$
Let $S=\textnormal{span}_{\mathbb{F}_{q}}\{(a_{1},\ldots,a_{q-1},0),\ldots ,(a_{1}^{l-1},\ldots,a_{q-1}^{l-1},0) , (a^{l}_{1}+1,\ldots,a^{l}_{q-1}+1,1)\}$. It is easy to see that $\dim(S)=l$ and $S\subseteq C_{1} \cap C_{2}$, thus $\dim (C_{1} \cap C_{2})\geq \dim(S)=l$. Note that $(1,\ldots,1,1) \in C_{1}$ but not in $C_{2}$, thus $\dim(C_{1} \cap C_{2}) \leq \dim(C_{1})-1=l$. Hence, $\dim (C_{1} \cap C_{2})=l$, then there exists a linear $l$-intersection
pair of two MDS codes with the same parameters $[q, l+1,q-l]_{q}$.

\textbf{Case 2}: When $l=0$.  Let $C_{1}=\mathbb{F}_{q}\cdot\bm{c}_{1}$ and $C_{2}=\mathbb{F}_{q}\cdot\bm{c}_{2}$ where $\bm{c}_{1}=(1,\ldots,1)$ and $\bm{c}_{2}=(a_{1},\ldots,a_{q-1},1)$ respectively. Then $C_{1}$ and $C_{2}$ are MDS codes with the same parameters $[q,1,q]_{q}$ satisfying $C_{1} \cap C_{2}=\{\bm{0}\}$, i.e., $\dim(C_{1} \cap C_{2})=0$.\qed
\end{proof}
\begin{theorem}
Let $q\geq3$ be a prime power and $k_{1}, k_{2}, l$ be non-negative integers such that $k_{1},k_{2}\leq q$ and $\max \{k_{1}+k_{2}-q-1, 0 \} \leq l \leq \min \{k_{1}, k_{2}\}$. If $1 \in \{ l, k_{1}-l, k_{2}-l \}$ $($except $(k_{1},k_{2},l)\in\{(2,1,1),(1,2,1)\})$, then there exists a linear $l$-intersection pair of two MDS codes with parameters $[q+1, k_{1},q-k_{1}+2]_{q}$ and $[q+1, k_{2},q-k_{2}+2]_{q}$.
\end{theorem}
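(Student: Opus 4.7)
My plan is to construct the required intersection pairs explicitly via extended GRS codes at the evaluation set $\mathbb{F}_q\cup\infty$, using the column-multiplier vector $\bm{v}$ to cut the intersection down to the prescribed dimension. The approach complements the preceding proposition of Guenda et al., whose polynomial argument must fail at $n=q+1$ exactly when one of $l,k_1-l,k_2-l$ equals $1$, because no monic degree-$1$ polynomial over $\mathbb{F}_q$ is coprime with $\prod_{a\in\mathbb{F}_q}(x-a)$.

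The key step is the following calculation. Fix $C_1=GRS_{k_1}(\mathbb{F}_q\cup\infty,\bm{1})$, whose codewords are $(f(a_1),\dots,f(a_q),f_{k_1-1})$ with $\deg f<k_1$ and $\{a_1,\dots,a_q\}=\mathbb{F}_q$, and take $C_2=GRS_{k_2}(\mathbb{F}_q\cup\infty,\bm{v})$ for $\bm{v}\in(\mathbb{F}_q^{\ast})^{q+1}$ to be chosen. A codeword of $C_1$ lies in $C_2$ iff some $g$ with $\deg g<k_2$ satisfies $v_i g(a_i)=f(a_i)$ for all $i\le q$ and $v_{q+1}g_{k_2-1}=f_{k_1-1}$. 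If $\bm{v}$ has exactly $s$ non-$1$ entries among its first $q$ coordinates, $f-g$ vanishes at the remaining $q-s$ points; provided $s\le q-\max(k_1,k_2)$, the degree comparison $\deg(f-g)\le\max(k_1,k_2)-1<q-s$ forces $g=f$ as polynomials. Then the perturbed positions yield equations $f(a_i)=0$, and taking $v_{q+1}\ne 1$ forces $f_{k_1-1}=0$, so $\dim(C_1\cap C_2)=k_1-1-s$ whenever $s\le k_1-1$. Using this formula, the non-containment subcases follow by choosing $s$ appropriately: assume WLOG $k_1\le k_2$; if $k_1-l=1$ take $s=0$ and the constraint reduces to $k_2\le q$; if $l=1$ and $k_1,k_2\ge 2$ take $s=k_1-2$ and the constraint becomes $k_1+k_2\le q+2$, which is exactly the hypothesis $l\ge k_1+k_2-q-1$. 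The symmetric case $k_2-l=1$ with $k_2\le k_1$ is handled by exchanging $C_1$ and $C_2$.

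The remaining cases are the containment ones, where $l=\min(k_1,k_2)$ forces one code inside the other. Under $1\in\{l,k_1-l,k_2-l\}$ these reduce to either $\min(k_1,k_2)=1$ or $|k_1-k_2|=1$. When $\min(k_1,k_2)=1$ and $\max(k_1,k_2)\ge 3$, the MDS weight-distribution formula yields $A_{q+1}>0$ in the larger code, so one takes the span of any weight-$(q+1)$ codeword as the required $1$-dimensional MDS subcode; the vanishing of $A_{q+1}$ in a $[q+1,2,q]_q$ MDS code, obtained from the same formula, is precisely what forces the excluded pairs $(k_1,k_2,l)\in\{(2,1,1),(1,2,1)\}$. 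The main obstacle I foresee is the remaining subcase $|k_1-k_2|=1$ with $\min(k_1,k_2)\ge 2$, where one must exhibit an MDS $[q+1,m]_q$ code properly contained in an MDS $[q+1,m+1]_q$ code; the na\"ive nesting of extended GRS codes fails due to the $\infty$-coordinate mismatch, so I would search for a specialised multiplier construction compensating for this, or reduce inductively via the dual pair.
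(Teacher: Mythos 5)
Your multiplier--perturbation computation is correct and is in fact more uniform than the paper's argument: the single identity $\dim(C_1\cap C_2)=k_1-1-s$, valid for $s\le\min\{k_1-1,\;q-\max(k_1,k_2)\}$, recovers in one stroke the paper's subcases $2\le k_1\le k_2$ with $l=1$ (which the paper splits into $k_1+k_2\le q$, $k_1+k_2=q+1$ and $k_1+k_2=q+2$ with three different explicit generator matrices) and all of its subcases with $k_1-l=1$, $l\ge2$ (three further explicit constructions, including $k_1=k_2=q$). Your treatment of the containment cases with $\min(k_1,k_2)=1$ via $A_{q+1}>0$ for dimension at least $3$ and $A_{q+1}=0$ for dimension $2$ coincides with the paper's argument and correctly isolates the excluded triples. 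So everything you actually prove is right, and the route through one intersection formula is cleaner than the paper's case-by-case matrices.

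The obstacle you flag at the end, however, is a genuine gap, and it is not one you could have closed by imitating the paper: after the reduction to $k_1\le k_2$, the hypothesis $1\in\{l,k_1-l,k_2-l\}$ still admits $k_2-l=1$ with $l=k_1\ge2$, i.e.\ $(k_1,k_2,l)=(m,m+1,m)$, which demands an MDS $[q+1,m]_q$ code properly contained in an MDS $[q+1,m+1]_q$ code. The paper's proof silently omits this subcase as well (its Case 1 assumes $l=1$ and its Case 2 assumes $k_1-l=1$). Worse, the subcase cannot always be filled, so no construction would have succeeded: for $q=3$ and $(k_1,k_2,l)=(2,3,2)$ one would need a $[4,2,3]_3$ MDS subcode of the (up to monomial equivalence unique) sum-zero $[4,3,2]_3$ code; its generator matrix must have four pairwise independent columns, hence nonzero multiples $\nu_j$ of all four points of $PG(1,3)$ summing to zero, and $\nu_1(1,0)+\nu_2(0,1)+\nu_3(1,1)+\nu_4(1,2)=(0,0)$ over $\mathbb{F}_3$ forces $\nu_2=0$, a contradiction. (For general odd $q$ the same obstruction appears: by Segre's theorem the coordinate hyperplanes of a $[q+1,3]_q$ MDS code are the tangents of a conic, and every line of $PG(2,q)$ passes through at least one pairwise intersection of those tangents.) So your instinct to single this out as the main difficulty is sound; it points to a defect in the statement rather than in your construction.
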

\begin{proof}
Let $\mathbb{F}_{q}=\{a_{1},\ldots,a_{q-1},0\}$, $\mathcal{A}_{1}=\{a_{1},\ldots,a_{q-1},0,\infty\}$, $\mathcal{A}_{2}=\{a_{1},\ldots,a_{q-1}, \\ \infty,0\}$. Without loss of generality, we assume that $k_{1} \leq k_{2}$.

\textbf{Case 1}:  $l=1$.

(i) When $k_{1}=k_{2}=1$. Let $C_{1}=C_{2}$ be two $[q+1,1,q+1]_{q}$ MDS codes, then $\dim(C_{1} \cap C_{2})=1$.

(ii) When $k_{1}=1$, $k_{2}=2$. We prove that there is no linear $l$-intersection pair of two MDS codes with parameters $[q+1,1,q+1]_{q}$ and $[q+1,2,q]_{q}$. Otherwise, suppose $C_{1}$ and $C_{2}$ are MDS codes satisfying  $\dim (C_{1} \cap C_{2})=1$ with parameters $[q+1,1,q+1]_{q}$ and $[q+1,2,q]_{q}$ respectively. Then $C_{1}\subseteq C_{2}$. However, for $C_{2}$, according to Theorem 1, $A_{q+1}=(q-1)\sum_{j=0}^{1}(-1)^{j} {q\choose j} q^{1-j}=0$,  i.e., $C_{2}$  has no codewords of weight $q+1$, which leads to a contradiction.

(iii) When $k_{1}=1$, $3 \leq k_{2}\leq q$. Let $C_{2}$ be a $[q+1,k_{2},q-k_{2}+2]_{q}$ MDS code. For $C_{2}$, according to Theorem 1, $A_{q+1}=(q-1)\sum_{j=0}^{k_{2}-1}(-1)^{j} {q\choose j} q^{k_{2}-1-j}$. For $j >0$, note that ${q\choose j}/ {q\choose j+1}=\frac{j+1}{q-j}>\frac{1}{q}$, thus ${q\choose j}q^{k_{2}-1-j}-{q\choose j+1}q^{k_{2}-1-(j+1)}>0$. Therefore, for $k_{2}\geq3$, when $k_{2}$ is even,

$\;\;\;\;A_{q+1}=(q-1)\sum_{m=0}^{\frac{k_{2}-2}{2}}\big({q\choose 2m}q^{k_{2}-1-2m}-{q\choose 2m+1}q^{k_{2}-1-(2m+1)}\big)>0;$
\\When $k_{2}$ is odd, \\
$A_{q+1}=(q-1)\Big(\sum_{m=0}^{\frac{k_{2}-3}{2}}\big({q\choose 2m}q^{k_{2}-1-2m}-{q\choose 2m+1}q^{k_{2}-1-(2m+1)}\big)+{q\choose k_{2}-1} q^{0}\Big)>0.$
Thus  $A_{q+1}>0$ for $k_{2}\geq3$, i.e., there exists a $\bm{c}\in C_{2}$ with $wt(\bm{c})=q+1$. Let $C_{1}=\mathbb{F}_{q}\cdot \bm{c}$ with parameters $[q+1,1,q+1]_{q}$, then $\dim(C_{1} \cap C_{2})=1$.

(iv) When $2\leq k_{1} \leq k_{2}\leq q$, $k_{1}+k_{2}\leq q$. Let $\bm{v}=( a_{1}^{k_{1}-1}, \ldots  ,a_{q-1}^{k_{1}-1}, 1, 1)$, $C_{1}=GRS_{k_{1}}(\mathcal{A}_{1}, \bm{1})$ with generator matrix $G_{1}$ and $C_{2}=GRS_{k_{2}}(\mathcal{A}_{2}, \bm{v})$ with generator matrix $G_{2}$, where
$$G_{{1}}=\left(
  \begin{array}{ccccc}
    1 &  \ldots & 1 &  1 &  0 \\
       a_{1} & \ldots  & a_{q-1} & 0 & 0\\
     \vdots & \ddots & \vdots & \vdots & \vdots \\
   a_{1}^{k_{1}-2} & \ldots  & a_{q-1}^{k_{1}-2} & 0 & 0\\
   a_{1}^{k_{1}-1} & \ldots  & a_{q-1}^{k_{1}-1} & 0 & 1\\
  \end{array}
\right)
, G_{{2}}=\left(
  \begin{array}{ccccc}
   a_{1}^{k_{1}-1} & \ldots  & a_{q-1}^{k_{1}-1} & 0 & 1 \\
       a_{1}^{k_{1}} & \ldots  & a_{q-1}^{k_{1}} & 0 & 0\\
     \vdots &\ddots & \vdots & \vdots & \vdots \\
     a_{1}^{k_{1}+k_{2}-3} & \ldots  & a_{q-1}^{k_{1}+k_{2}-3} & 0 & 0\\
 a_{1}^{k_{1}+k_{2}-2} & \ldots  & a_{q-1}^{k_{1}+k_{2}-2} & 1 & 0\\
  \end{array}
\right)
.$$
Let $S=\textnormal{span}_{\mathbb{F}_{q}}\{(a_{1}^{k_{1}-1},\ldots,a_{q-1}^{k_{1}-1},0,1)\}$, then $\dim(S)=1$ and $S\subseteq C_{1}\cap C_{2}$. Thus $\dim(C_{1}\cap C_{2})\geq\dim(S)=1$.
Let $T=\textnormal{span}_{\mathbb{F}_{q}}\{(1,\ldots,1,0),(a_{1},\ldots,a_{q-1},0,0),\ldots,\\(a_{1}^{k_{1}-2},\ldots,a_{q-1}^{k_{1}-2},0,0)\}$, then $\dim(T)=k_{1}-1$ and $T\subseteq C_{1}$. Since $k_{1}+k_{2}-2\leq q-2$, the basis of $T$ is linearly independent with the rows of $G_{2}$, i.e., $T \cap C_{2}=\{\bm{0}\}$. Hence, $\dim(C_{1}\cap C_{2})\leq \dim(C_{1})-\dim(T)=1$. Then  $\dim(C_{1}\cap C_{2})=1$.

(v) When $2\leq k_{1} \leq k_{2}\leq q$, $k_{1}+k_{2}= q+1$. Let $\bm{v}=( a_{1}^{k_{1}-1}, \ldots  ,a_{q-1}^{k_{1}-1}, 1, 1)$, $C_{1}=GRS_{k_{1}}(\mathcal{A}_{1}, \bm{1})$ with generator matrix $G_{1}$ and $C_{2}=GRS_{k_{2}}(\mathcal{A}_{1}, \bm{v})$ with generator matrix $G_{2}$, where
$$G_{{1}}=\left(
  \begin{array}{ccccc}
    1 &  \ldots & 1 &  1 &  0 \\
       a_{1} & \ldots  & a_{q-1} & 0 & 0\\
     \vdots & \ddots & \vdots & \vdots & \vdots \\
     a_{1}^{k_{1}-1} & \ldots  & a_{q-1}^{k_{1}-1} & 0 & 1\\
  \end{array}
\right)
, G_{{2}}=\left(
  \begin{array}{ccccc}
   a_{1}^{k_{1}-1} & \ldots  & a_{q-1}^{k_{1}-1} & 1 & 0 \\
       a_{1}^{k_{1}} & \ldots  & a_{q-1}^{k_{1}} & 0 & 0\\
     \vdots & \ddots& \vdots & \vdots & \vdots \\
     a_{1}^{q-1} & \ldots  & a_{q-1}^{q-1} & 0 & 1\\
  \end{array}
\right)
.$$
Let $S=\textnormal{span}_{\mathbb{F}_{q}}\{(a_{1}^{k_{1}-1}+1 , \ldots  ,a_{q-1}^{k_{1}-1}+1 , 1 , 1)\}$, then $\dim(S)=1$ and $S\subseteq C_{1}\cap C_{2}$. Thus $\dim(C_{1}\cap C_{2})\geq\dim(S)=1$.
Let $T=\textnormal{span}_{\mathbb{F}_{q}}\{(1,\ldots,1,0),(a_{1},\ldots,a_{q-1},0,0), \\ \ldots,(a_{1}^{k_{1}-2},\ldots,a_{q-1}^{k_{1}-2},0,0)\}$, then $\dim(T)=k_{1}-1$ and $T\subseteq C_{1}$. It's easy to see that the basis of $T$ is linearly independent with the rows of $G_{2}$, i.e., $T \cap C_{2}=\{\bm{0}\}$. Hence, $\dim(C_{1}\cap C_{2})\leq \dim(C_{1})-\dim(T)=1$. Then  $\dim(C_{1}\cap C_{2})=1$.

(vi) When $2\leq k_{1} \leq k_{2}\leq q$, $k_{1}+k_{2}= q+2$ (By Lemma 1, $k_{1}+k_{2}$ is no more than $q+2$). Let $\bm{v}=( a_{1}^{k_{1}-2}, \ldots  ,a_{q-1}^{k_{1}-2}, 1, 1)$,  $C_{1}=GRS_{k_{1}}(\mathcal{A}_{1}, \bm{1})$ with generator matrix $G_{1}$ and $C_{2}=GRS_{k_{2}}(\mathcal{A}_{1}, \bm{v})$ with generator matrix $G_{2}$, where
$$G_{{1}}=\left(
  \begin{array}{ccccc}
    1 &  \ldots & 1 &  1 &  0 \\
       a_{1} & \ldots  & a_{q-1} & 0 & 0\\
     \vdots & \ddots & \vdots & \vdots & \vdots \\
     a_{1}^{k_{1}-2} & \ldots  & a_{q-1}^{k_{1}-2} & 0 & 0\\
     a_{1}^{k_{1}-1} & \ldots  & a_{q-1}^{k_{1}-1} & 0 & 1\\
  \end{array}
\right)
, G_{{2}}=\left(
  \begin{array}{ccccc}
   a_{1}^{k_{1}-2} & \ldots  & a_{q-1}^{k_{1}-2} & 1 & 0 \\
       a_{1}^{k_{1}-1} & \ldots  & a_{q-1}^{k_{1}-1} & 0 & 0\\
     \vdots & \ddots & \vdots & \vdots & \vdots \\
      a_{1}^{q-2} & \ldots & a_{q-1}^{q-2} & 0 & 0 \\
     a_{1}^{q-1} & \ldots  & a_{q-1}^{q-1} & 0 & 1\\
  \end{array}
\right)
.$$
Let $S=\textnormal{span}_{\mathbb{F}_{q}}\{(a_{1}^{k_{1}-1}+a_{1}^{k_{1}-2}+1 , \ldots  , a_{q-1}^{k_{1}-1}+a_{q-1}^{k_{1}-2}+1 ,1, 1 )\}$, then $\dim(S)=1$ and $S\subseteq C_{1}\cap C_{2}$. Thus $\dim(C_{1}\cap C_{2})\geq\dim(S)=1$.
Let $T=\textnormal{span}_{\mathbb{F}_{q}}\{(1,\ldots,1,0),\\(a_{1},\ldots,a_{q-1},0,0),\ldots,(a_{1}^{k_{1}-2},\ldots,a_{q-1}^{k_{1}-2},0,0)\}$, then $\dim(T)=k_{1}-1$ and $T\subseteq C_{1}$. It's easy to see that the basis of $T$ is linearly independent with the rows of $G_{2}$, i.e., $T \cap C_{2}=\{\bm{0}\}$. Hence, $\dim(C_{1}\cap C_{2})\leq \dim(C_{1})-\dim(T)=1$. Then  $\dim(C_{1}\cap C_{2})=1$.

\textbf{Case 2}: $k_{1}-l=1$, $l\geq 2$.

(i) When $k_{1}<k_{2}\leq q$. Let $C_{1}=GRS_{k_{1}}(\mathcal{A}_{1}, \bm{1})$ with generator matrix $G_{1}$ and $C_{2}=GRS_{k_{2}}(\mathcal{A}_{1}, \bm{1})$ with generator matrix $G_{2}$, where
$$G_{{1}}=\left(
  \begin{array}{ccccc}
    1 &  \ldots & 1 &  1 &  0 \\
       a_{1} & \ldots  & a_{q-1} & 0 & 0\\
     \vdots & \ddots & \vdots & \vdots & \vdots \\
     a_{1}^{k_{1}-1} & \ldots  & a_{q-1}^{k_{1}-1} & 0 & 1\\
  \end{array}
\right)
, G_{{2}}=\left(
  \begin{array}{ccccc}
    1 &  \ldots & 1 &  1 &  0 \\
       a_{1} & \ldots  & a_{q-1} & 0 & 0\\
     \vdots & \ddots & \vdots & \vdots & \vdots \\
     a_{1}^{k_{2}-1} & \ldots  & a_{q-1}^{k_{2}-1} & 0 & 1\\
  \end{array}
\right)
.$$
Note that the first $k_{1}-1$ rows of $G_{1}$ are also rows of $G_{2}$, thus $\dim(C_{1}\cap C_{2})\geq k_{1}-1$. If the $k_{1}$-th row of $G_{1}$ is belonged to $C_{2}$, then  $(a_{1}^{k_{1}-1},\ldots, a_{q-1}^{k_{1}-1},0 ,1)-(a_{1}^{k_{1}-1},\ldots, a_{q-1}^{k_{1}-1},0 ,0)=(0,\ldots,0,0 ,1) \in C_{2}$, however, $d(C_{2})=q+1-k_{2}+1\geq q+1-q+1=2$, which leads to a contradiction. Hence, $\dim(C_{1}\cap C_{2})\leq \dim(C_{1})-1= k_{1}-1$. Then  $\dim(C_{1}\cap C_{2})= k_{1}-1=l$.

(ii) When $k_{1}=k_{2} \leq q-1$.  Let $C_{1}=GRS_{l+1}(\mathcal{A}_{1}, \bm{1})$ with generator matrix $G_{1}$ and $C_{2}=GRS_{l+1}(\mathcal{A}_{2}, \bm{1})$ with generator matrix $G_{2}$, where
$$G_{{1}}=\left(
  \begin{array}{ccccc}
    1 &  \ldots & 1 &  1 &  0 \\
       a_{1} & \ldots  & a_{q-1} & 0 & 0\\
     \vdots & \ddots & \vdots & \vdots & \vdots \\
     a_{1}^{l} & \ldots  & a_{q-1}^{l} & 0 & 1\\
  \end{array}
\right)
, G_{{2}}=\left(
  \begin{array}{ccccc}
    1 &  \ldots & 1 &  0 &  1 \\
       a_{1} & \ldots  & a_{q-1} & 0 & 0\\
     \vdots & \ddots & \vdots & \vdots & \vdots \\
     a_{1}^{l} & \ldots  & a_{q-1}^{l} & 1 & 0\\
  \end{array}
\right)
.$$
Let $S=\textnormal{span}_{\mathbb{F}_{q}}\{( a_{1} , \ldots  , a_{q-1} , 0 , 0),\ldots,  (a_{1}^{l-1} , \ldots  , a_{q-1}^{l-1} , 0 , 0),  (a_{1}^{l}+1 , \ldots  , a_{q-1}^{l}+1 , 1 , 1)\}$. Since $l\leq q-2$, it follows that $\dim(S)=l$ and $S\subseteq C_{1} \cap C_{2}$,  hence, $\dim(C_{1}\cap C_{2})\geq \dim(S)=l$. Obviously, $(1,1,\ldots,1,0) \notin C_{2}$, hence, $\dim(C_{1}\cap C_{2})\leq \dim(C_{1})-1=l$. Then  $\dim(C_{1}\cap C_{2})= l$.

(iii) When $k_{1}=k_{2}=q$. Let $\bm{v}=( a_{1}, \ldots  ,a_{q-1}, 1, 1)$, $C_{1}=GRS_{q}(\mathcal{A}_{1}, \bm{1})$ with generator matrix $G_{1}$ and $C_{2}=GRS_{q}(\mathcal{A}_{1}, \bm{v})$ with generator matrix $G_{2}$, where
$$G_{{1}}=\left(
  \begin{array}{ccccc}
    1 &  \ldots & 1 &  1 &  0 \\
       a_{1} & \ldots  & a_{q-1} & 0 & 0\\
     \vdots & \ddots & \vdots & \vdots & \vdots \\
     a_{1}^{q-1} & \ldots  & a_{q-1}^{q-1} & 0 & 1\\
  \end{array}
\right)
, G_{{2}}=\left(
  \begin{array}{ccccc}
    a_{1} & \ldots  & a_{q-1} & 1 &  0 \\
       a_{1}^{2} & \ldots  & a_{q-1}^{2} & 0 & 0\\
     \vdots & \ddots & \vdots & \vdots & \vdots \\
     a_{1}^{q-1} & \ldots  & a_{q-1}^{q-1} & 0 & 0\\
   a_{1} & \ldots  & a_{q-1} & 0 &  1 \\
\end{array}
\right)
.$$
Let $S=\textnormal{span}_{\mathbb{F}_{q}}\{( a_{1} , \ldots  , a_{q-1} , 0 , 0),\ldots,  (a_{1}^{q-2} , \ldots  , a_{q-1}^{q-2} , 0 , 0),  (0, \ldots  ,0 , 1 , -1)\}$. N-ote that $(0, \ldots  ,0 , 1 , -1)=( a_{1} , \ldots  , a_{q-1} , 1 , 0)-( a_{1} , \ldots  , a_{q-1} , 0 , 1)=(1, \ldots  ,1 , 1 , 0)\\-(a_{1}^{q-1} , \ldots  , a_{q-1}^{q-1} , 0 , 1)$, thus $S\subseteq C_{1} \cap C_{2}$ and $\dim(S)=q-1$.  Then $\dim(C_{1}\cap C_{2})\geq \dim(S)=q-1$. Obviously, $(1,1,\ldots,1,0) \notin C_{2}$, hence,   $\dim(C_{1}\cap C_{2})\leq \dim(C_{1})-1=q-1$. Then  $\dim(C_{1}\cap C_{2})= q-1=l$.

In summary, when $n=q+1$ with $1 \in \{ l, k_{1}-l, k_{2}-l \}$ for $k_{1},k_{2}\leq q$, we give constructions of all possible cases, then the theorem holds. \qed
\end{proof}
\subsection{Linear $l$-intersection pairs of MDS codes over $\mathbb{F}_{2^{m}}$ with length $n=2^{m}+2\geq6$}
In this section, assuming the validity of MDS Conjecture, we consider linear $l$-intersection pairs of two MDS codes with parameters $[n,k_{1},n-k_{1}+1]_q$ and $[n,k_{2},n-k_{2}+1]_q$, where  $q=2^{m}\geq4$, $n=q+2$ and $k_{1},k_{2} \in \{3, q-1\}$.

For $q=2^{m}\geq4$, let $\mathbb{F}_{q}=\{a_{1},\ldots,a_{q-1},0\}$ and $v_{1}, v_{2},\ldots v_{q+2}$ be nonzero elements in $\mathbb{F}_{q}$. Then there exist a $[q+2,3,q]_{q}$ MDS code with generator matrix $G_{1}$ or parity check matrix $G_{2}$ and a $[q+2,q-1,4]_{q}$ MDS code with generator matrix $G_{2}$ or parity check matrix $G_{1}$, where
$$G_{1}=\left(
  \begin{array}{cccccc}
    v_{1} &  \ldots & v_{q-1} &  v_{q} &  0 &  0 \\
 v_{1}a_{1} & \ldots  & v_{q-1}a_{q-1} & 0 & v_{q+1} & 0\\
    v_{1}a_{1}^{2} & \ldots  & v_{q-1}a_{q-1}^{2} & 0 & 0 & v_{q+2}\\
\end{array}
\right),
$$
$$G_{2}=\left(
    \begin{array}{ccccccc}
      v_{1} & v_{2}a_{1} & v_{3}a_{1}^{2} & v_{4} & 0 & \ldots & 0 \\
      v_{1} & v_{2}a_{2} & v_{3}a_{2}^{2} & 0 & v_{5} & \ldots & 0 \\
      \vdots & \vdots & \vdots  & 0 & 0 & \ddots & 0 \\
      v_{1} & v_{2}a_{q-1} & v_{3}a_{q-1}^{2} & 0 & 0 & \ldots & v_{q+2} \\
    \end{array}
  \right).
$$

Let $C_{1}$ and $C_{2}$ be two MDS codes with the same parameters  $[q+2,q-1,4]_{q}$. By Lemma 1, $q-4\leq\dim( C_{1} \cap C_{2})\leq q-1$, then we obtain the following theorem.
\begin{theorem}
For $q\geq4$ and $q-4 \leq l\leq q-1$, there exist  linear $l$-intersection pairs of two MDS codes with the same parameters $[q+2,q-1,4]_{q}$.
\end{theorem}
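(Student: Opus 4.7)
The plan is to pass to duals and work with dimension $3$ instead of $q-1$. For any $[q+2,q-1,4]_q$ MDS codes $C_1,C_2$, the duals $D_i:=C_i^\perp$ are $[q+2,3,q]_q$ MDS codes, and from $(C_1\cap C_2)^\perp=D_1+D_2$ one gets
\[
\dim(D_1\cap D_2)=6-\dim(D_1+D_2)=\dim(C_1\cap C_2)-(q-4).
\]
So it is enough to exhibit, for each $l'\in\{0,1,2,3\}$, a pair of $[q+2,3,q]_q$ MDS codes $D_1,D_2$ with $\dim(D_1\cap D_2)=l'$; dualising then produces the required primal pair at $l=l'+(q-4)\in\{q-4,q-3,q-2,q-1\}$.

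I would fix $D_1$ to be the code given by the matrix $G_1$ of the subsection with $v_1=\cdots=v_{q+2}=1$; its codewords are exactly the vectors $(f(a_1),\ldots,f(a_{q-1}),f_0,f_1,f_2)$ for $f(x)=f_0+f_1x+f_2x^2\in\mathbb{F}_q[x]$. Since $q=2^m\geq 4$, I can choose scalars $\beta_0,\beta_1,\beta_2\in\mathbb{F}_q\setminus\{0,1\}$. I then take $D_2$ to be the $G_1$-form code with $v_1=\cdots=v_{q-1}=1$ and with $(v_q,v_{q+1},v_{q+2})$ obtained from $(1,1,1)$ by replacing exactly $3-l'$ of the three entries by the corresponding $\beta_j$. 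By the $G_1$-form description at the start of the subsection, any such choice gives a genuine $[q+2,3,q]_q$ MDS code, so each $D_2$ is legitimate.

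The intersection calculation is a short polynomial argument. A common codeword of $D_1,D_2$ comes from polynomials $f,g$ of degree $\leq 2$ with $f(a_i)=g(a_i)$ for all $i=1,\ldots,q-1$; since $q-1\geq 3>\deg(f-g)$, this forces $f=g$. Matching the last three coordinates yields
\[
(1+v_q)f_0=(1+v_{q+1})f_1=(1+v_{q+2})f_2=0,
\]
and in characteristic $2$ the factor $1+v$ vanishes iff $v=1$. Hence $f_j$ must equal $0$ at exactly those indices where the corresponding scalar was replaced by some $\beta_j$, and so $D_1\cap D_2$ is the space of polynomials whose ``perturbed'' coefficients vanish, giving $\dim(D_1\cap D_2)=l'$ as required.

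The only things that could go wrong are (i) the inequality $q-1\geq 3$ needed to deduce $f=g$ from $q-1$ evaluations, and (ii) the existence of elements in $\mathbb{F}_q\setminus\{0,1\}$; both hold precisely because $q=2^m\geq 4$. There is no substantive obstacle, and no case analysis on $l'$ is needed beyond this single uniform construction.
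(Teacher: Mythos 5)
Your proof is correct, and it reaches the result by a route that is dual to, but not identical with, the paper's. The paper works directly with the $(q-1)$-dimensional codes: it takes generator matrices $(U\,|\,V)$ and $(U\,|\,V')$ sharing the three Vandermonde columns and differing only in the diagonal tail, computes the parity check matrix $H_{2}=(I\,|-U^{T}V'^{-1})$ explicitly, and applies Lemma 2 to get $\dim(C_{1}\cap C_{2})=(q-1)-\mathrm{rank}(G_{1}H_{2}^{T})$, where $G_{1}H_{2}^{T}$ has rows $b_{i}(1,a_{i-3},a_{i-3}^{2})$ with $b_{i}=1-v_{i}v_{i}'^{-1}$, so its rank equals the number of nonzero $b_{i}$ (at most three, by the Vandermonde structure). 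You instead pass to the $3$-dimensional dual codes, use the polynomial-evaluation description of the hyperoval code, deduce $f=g$ from agreement at $q-1\geq3$ points, and let the perturbed tail multipliers kill exactly $3-l'$ coefficients; your identity $\dim(D_{1}\cap D_{2})=\dim(C_{1}\cap C_{2})-(q-4)$ is precisely what Lemma 2 encodes in matrix form. What your version buys is the avoidance of any parity-check computation and a one-line replacement for the rank count; what the paper's version buys is that it never needs to dualize and treats all four values of $l$ with a single matrix identity. Both arguments lean on the same background fact, asserted at the start of the subsection, that the $G_{1}$-form matrices with arbitrary nonzero $v_{i}$ generate $[q+2,3,q]_{q}$ MDS codes for $q=2^{m}$, and both use only $q\geq4$ (for you: $q-1>2$ and $\mathbb{F}_{q}\setminus\{0,1\}\neq\emptyset$).
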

\begin{proof}
Let $\mathbb{F}_{q}^{\ast}=\{a_{1},\ldots,a_{q-1}\}$, $v_{i},v_{i}' \in \mathbb{F}_{q}^{\ast}$ for $i=4,\ldots,q+2$ and
$$U=\left(
    \begin{array}{ccc}
      1 & a_{1} & a_{1}^{2} \\
      1 & a_{2} & a_{2}^{2}  \\
      \vdots & \vdots & \vdots  \\
      1 & a_{q-1} & a_{q-1}^{2} \\
    \end{array}
  \right),
V=\left(
    \begin{array}{cccc}
      v_{4} & 0 & \ldots & 0 \\
     0 & v_{5} & \ldots & 0 \\
       0 & 0 & \ddots & 0 \\
      0 & 0 & \ldots & v_{q+2} \\
    \end{array}
  \right),
V'=\left(
    \begin{array}{cccc}
      v'_{4} & 0 & \ldots & 0 \\
     0 & v'_{5} & \ldots & 0 \\
       0 & 0 & \ddots & 0 \\
      0 & 0 & \ldots & v'_{q+2} \\
    \end{array}
  \right).
$$
Let $C_{1}$ and $C_{2}$ be two $[q+2,q-1,4]_{q}$ MDS codes with generator matrices $G_{1}$ and $G_{2}$ respectively, where
$$G_{1}=\left(
    \begin{array}{ccccccc}
      1 & a_{1} & a_{1}^{2} & v_{4} & 0 & \ldots & 0 \\
      1 & a_{2} & a_{2}^{2} &  0 & v_{5} & \ldots & 0 \\
      \vdots &\vdots & \vdots &  0 & 0 & \ddots & 0 \\
      1 & a_{q-1} & a_{q-1}^{2} &  0 & 0 & \ldots & v_{q+2} \\
    \end{array}
  \right)=(U\,|\,V) \;
\textnormal{and}\;G_{2}=(U\,|\,V').
$$
Note that $(U\,|\,V')\cdot {I\choose -V'^{-1}U }=U-V'V'^{-1}U=0$, thus the parity check matrix $H_{2}$ of $C_{2}$ is $(I\,|-U^{T}V'^{-1})$. Then $G_{1}H_{2}^{T}=(U\,|\,V)\cdot {I\choose -V'^{-1}U }=U-VV'^{-1}U$. Let $b_{i}=1-v_{i}\cdot v_{i}'^{-1}$ for $i=4,\ldots,q+2$, then
$$G_{1}H_{2}^{T}=\left(
    \begin{array}{ccc}
      b_{4} & b_{4}a_{1} & b_{4}a_{1}^{2} \\
      b_{5} & b_{5}a_{2} & b_{5}a_{2}^{2}\\
      \vdots & \vdots & \vdots\\
      b_{q+2} & b_{q+2}a_{q-1} & b_{q+2}a_{q-1}^{2}\\
    \end{array}
  \right).
$$
Obviously, for any $q-4 \leq l \leq q-1$, we can always choose $v_{i}=v_{i}'$ for $4\leq i\leq l+3$ and $b_{i}\neq0$ for $l+4\leq i\leq q+2$ such that $\textnormal{rank}(G_{1}H_{2}^{T})=q-1-l$. By Lemma 2, $\dim( C_{1} \cap C_{2})= \dim( C_{1})-\textnormal{rank}(G_{1}H_{2}^{T})=l$. Hence,  the theorem  holds. \qed
\end{proof}

Let $C_{1}$ and $C_{2}$ be two MDS codes with the same parameters  $[q+2,3,q]_{q}$. By Lemma 1, $0\leq\dim( C_{1} \cap C_{2})\leq 3$, then we obtain the following theorem.
\begin{theorem}
For $q>4$ and $0\leq l\leq3$, there exist linear $l$-intersection pairs of two MDS codes with the same parameters $[q+2,3,q]_{q}$.
\end{theorem}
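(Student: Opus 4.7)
The plan is to mimic the proof of Theorem~6 with the roles of the two dimensions exchanged. Following the matrix conventions introduced just before Theorem~6, I parametrize the two codes by generator matrices
\[
G_{1}=(U^{T}\,|\,D),\qquad G_{2}=(U^{T}\,|\,D'),
\]
where $U^{T}$ is the $3\times(q-1)$ matrix whose rows are $(1,\ldots,1)$, $(a_{1},\ldots,a_{q-1})$, $(a_{1}^{2},\ldots,a_{q-1}^{2})$, and $D=\mathrm{diag}(v_{q},v_{q+1},v_{q+2})$, $D'=\mathrm{diag}(v'_{q},v'_{q+1},v'_{q+2})$ are $3\times 3$ diagonal matrices with nonzero entries. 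A short case analysis on $3\times 3$ minors, using that the $a_{j}$ are distinct and nonzero, shows that $C_{1}$ and $C_{2}$ are $[q+2,3,q]_{q}$ MDS codes for every nonzero choice of $D,D'$.

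Because $D'$ is invertible, a parity check matrix of $C_{2}$ can be taken to be $H_{2}=(I_{q-1}\,|\,-U{D'}^{-1})$; it has rank $q-1$ and one checks directly that $G_{2}H_{2}^{T}=0$. Substituting and simplifying gives
\[
G_{1}H_{2}^{T}=U^{T}-D{D'}^{-1}U^{T}=(I_{3}-D{D'}^{-1})\,U^{T}.
\]
The left factor is a diagonal $3\times 3$ matrix whose $i$-th entry is $1-v_{i}/v'_{i}$, and $U^{T}$ has full row rank $3$, since a nonzero polynomial of degree at most $2$ cannot vanish on all $q-1\geq 7$ nonzero elements of $\mathbb{F}_{q}$. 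Hence $\mathrm{rank}(G_{1}H_{2}^{T})$ equals precisely the number of indices $i\in\{q,q+1,q+2\}$ at which $v_{i}\neq v'_{i}$.

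Given a target $l\in\{0,1,2,3\}$, I choose $D,D'$ with $v_{i}=v'_{i}$ on exactly $l$ of the three indices and $v_{i}\neq v'_{i}$ on the other $3-l$; such a choice exists for any $q\geq 3$. Then $\mathrm{rank}(G_{1}H_{2}^{T})=3-l$, and Lemma~2 yields
\[
\dim(C_{1}\cap C_{2})=k_{1}-\mathrm{rank}(G_{1}H_{2}^{T})=3-(3-l)=l.
\]
The only step that deserves real attention is the MDS verification for the generator form $(U^{T}\,|\,D)$: it reduces to inspecting three types of $3\times 3$ column selections (three columns from the $U^{T}$ block; two from $U^{T}$ and one standard-basis column; one from $U^{T}$ and two standard-basis columns; and all three standard-basis columns), each evaluating to a product of Vandermonde differences, nonzero $a_{j}$'s, and nonzero $v_{i}$'s. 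With this in hand, the rest of the argument is a direct mirror of the proof of Theorem~6.
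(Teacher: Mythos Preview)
Your argument is correct and, in fact, cleaner than the paper's own proof. Note, though, that what you call ``Theorem~6'' is the paper's Theorem~5 (the $[q+2,q-1,4]_q$ case); the statement you are proving \emph{is} Theorem~6 in the paper's numbering.

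The paper proves this theorem by four separate ad hoc constructions, one for each $l\in\{0,1,2,3\}$: for $l=0$ it uses two codes whose generator rows involve disjoint sets of monomial exponents $\{0,1,2\}$ and $\{3,4,5\}$; for $l=1,2$ it overlaps these exponent sets or permutes the extra columns and verifies the intersection dimension by hand; for $l=3$ it takes $C_1=C_2$. Your approach instead transports the diagonal-variation technique from the $[q+2,q-1,4]_q$ proof: you keep the $U^T$-block fixed and vary only the $3\times 3$ diagonal block, reducing the intersection computation to $\mathrm{rank}\big((I_3-DD'^{-1})U^T\big)$ via Lemma~2. Since $U^T$ has full row rank, this rank is simply the number of positions where $D$ and $D'$ differ, and all four values of $l$ fall out uniformly from a single parametrized family. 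The MDS verification you sketch is exactly the content of the paragraph preceding Theorem~5 in the paper (with $v_1=\cdots=v_{q-1}=1$), so you are on solid ground there. Your route is shorter, avoids case analysis, and makes the parallel with the dual-dimension theorem explicit; the paper's route has the minor advantage that each pair is written down completely explicitly, but at the cost of four independent verifications.
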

\begin{proof}
(i) $l=0$: Let  $C_{1}$ and $C_{2}$ be two $[q+2,3,q]_{q}$ MDS codes with  generator matrices $G_{1}$ and $G_{2}$ respectively, where
$$G_{1}=\left(
  \begin{array}{cccccc}
    1 &  \ldots & 1 &  1 &  0 &  0 \\
 a_{1} & \ldots  & a_{q-1} & 0 & 1 & 0\\
    a_{1}^{2} & \ldots  & a_{q-1}^{2} & 0 & 0 & 1\\
\end{array}
\right)
,
G_{2}=\left(
  \begin{array}{cccccc}
    a_{1}^{3} & \ldots  & a_{q-1}^{3} & 1 & 0 & 0\\
 a_{1}^{4} & \ldots  & a_{q-1}^{4} & 0 & 1 & 0\\
    a_{1}^{5} & \ldots  & a_{q-1}^{5} & 0 & 0 & 1\\
\end{array}
\right).
$$
Obviously, $\dim(C_{1} \cap C_{2})=0$.

(ii) $l=1$: Let  $C_{1}$ and $C_{2}$ be two $[q+2,3,q]_{q}$ MDS codes with  generator matrices $G_{1}$ and $G_{2}$ respectively, where
$$G_{1}=\left(
  \begin{array}{cccccc}
    1 &  \ldots & 1 &  1 &  0 &  0 \\
 a_{1} & \ldots  & a_{q-1} & 0 & 1 & 0\\
    a_{1}^{2} & \ldots  & a_{q-1}^{2} & 0 & 0 & 1\\
\end{array}
\right)
,
G_{2}=\left(
  \begin{array}{cccccc}
    a_{1}^{2} & \ldots  & a_{q-1}^{2} & 0 & 0 & 1\\
 a_{1}^{3} & \ldots  & a_{q-1}^{3} & 0 & 1 & 0\\
    a_{1}^{4} & \ldots  & a_{q-1}^{4} & 1 & 0 & 0\\
\end{array}
\right).
$$
Note that $C_{1} \cap C_{2} = \textnormal{span}_{\mathbb{F}_{q}}\{ (a_{1}^{2} , \ldots  , a_{q-1}^{2} , 0 , 0 , 1)\}$, thus $\dim(C_{1} \cap C_{2})=1$.

(iii) $l=2$: Let  $C_{1}$ and $C_{2}$ be two $[q+2,3,q]_{q}$ MDS codes with  generator matrices $G_{1}$ and $G_{2}$ respectively, where
$$G_{1}=\left(
  \begin{array}{cccccc}
    1 &  \ldots & 1 &  1 &  0 &  0 \\
 a_{1} & \ldots  & a_{q-1} & 0 & 1 & 0\\
    a_{1}^{2} & \ldots  & a_{q-1}^{2} & 0 & 0 & 1\\
\end{array}
\right)
,
G_{2}=\left(
  \begin{array}{cccccc}
    1 &  \ldots & 1 &  0 &  0 &  1 \\
 a_{1} & \ldots  & a_{q-1} & 0 & 1 & 0\\
    a_{1}^{2} & \ldots  & a_{q-1}^{2} & 1 & 0 & 0\\
\end{array}
\right).
$$
Note that $S=\textnormal{span}_{\mathbb{F}_{q}}\{(a_{1}^{2}+1 , \ldots  , a_{q-1}^{2}+1 , 1 , 0 , 1),(a_{1} , \ldots  , a_{q-1} , 0 , 1 , 0)\}\subseteq C_{1} \cap C_{2}$, thus $\dim(C_{1} \cap C_{2})\geq \dim(S)=2$. Obviously, $(1 ,  \ldots , 1 , 1 ,  0 ,  0)\notin C_{2}$, hence, $\dim(C_{1} \cap C_{2})\leq \dim(C_{1})-1=2$. Then $\dim(C_{1} \cap C_{2})= 2$.

(iv) $l=3$: Let $C_{1}=C_{2}$ be two $[q+2,3,q]_{q}$ MDS codes. Obviously, we have  $\dim(C_{1} \cap C_{2})= 3$. \qed
\end{proof}

Let $C_{1}$ and $C_{2}$ be two MDS codes with parameters  $[q+2,3,q]_{q}$ and $[q+2,q-1,4]_{q}$ respectively. By Lemma 1, $0\leq\dim( C_{1} \cap C_{2})\leq 3$, then we obtain the following theorem.
\begin{theorem}
For $q>4$ and $0\leq l\leq3$, there exist  linear $l$-intersection pairs of two MDS codes with parameters $[q+2,3,q]_{q}$ and $[q+2,q-1,4]_{q}$.
\end{theorem}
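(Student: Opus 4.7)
The plan is to handle the four values $l \in \{0, 1, 2, 3\}$ separately by explicit construction, in the spirit of Theorems 5 and 6. By Lemma 2, $\dim(C_1 \cap C_2) = 3 - \operatorname{rank}(G_1 H_2^T)$, so it suffices to produce, for each $l$, a $3 \times (q+2)$ generator matrix $G_1$ of $C_1$ and a $3 \times (q+2)$ parity-check matrix $H_2$ of $C_2$ with $\operatorname{rank}(G_1 H_2^T) = 3 - l$. Because the dual of any $[q+2, q-1, 4]_q$ MDS code is itself a $[q+2, 3, q]_q$ MDS code, we may place $H_2$ in the canonical hyperoval form displayed at the start of Section 3.2, parametrized by a nonzero weight vector $\bm{w}$.

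Taking both $G_1$ and $H_2$ in this canonical form with the same evaluation set $\mathbb{F}_q = \{a_1, \ldots, a_{q-1}, 0\}$ and the same three distinguished columns, a direct multiplication yields
\[
G_1 H_2^T = \begin{pmatrix} S_0 + c_q & S_1 & S_2 \\ S_1 & S_2 + c_{q+1} & S_3 \\ S_2 & S_3 & S_4 + c_{q+2} \end{pmatrix},
\]
where $c_k := v_k w_k \neq 0$ and $S_m := \sum_{k=1}^{q-1} c_k a_k^m$. Since $q \geq 8$, the linear map $(c_1, \ldots, c_{q-1}) \mapsto (S_0, \ldots, S_4)$ is a surjection of Vandermonde type, so the power sums may be prescribed freely while the three diagonal shifts $c_q, c_{q+1}, c_{q+2}$ remain available for further tuning. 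For $l = 0, 1, 2$, generic choices of these parameters make the displayed matrix have rank $3, 2, 1$ respectively; verification reduces to a short determinant calculation or the exhibition of dependent rows, and I would record an explicit weight choice in each case (for instance, taking most $c_k = 1$ and adjusting only one or two to break or preserve specific equalities among the $S_m$).

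The main obstacle is $l = 3$, which requires $G_1 H_2^T = 0$. The symmetric form above cannot achieve this, since comparing the $(1,3)$ and $(2,2)$ entries would simultaneously force $S_2 = 0$ and $S_2 = c_{q+1} \neq 0$. My workaround is to break the symmetry by placing the distinguished columns of $H_2$ at positions differing from those of $G_1$, equivalently by representing $C_2^\perp$ by a hyperoval with a different choice of ``points at infinity.'' This introduces the extra degrees of freedom needed to zero out the full $3 \times 3$ product; a dimension count (nine bilinear equations in $q+2$ nonzero scaling parameters, giving $q - 7 \geq 1$ free parameters for $q \geq 8$) confirms that a valid configuration exists and that the scalings can be chosen all nonzero. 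Setting $C_2$ equal to the orthogonal complement of the resulting second $[q+2, 3, q]_q$ MDS code then yields $C_1 \subseteq C_2$, handling the $l = 3$ case and completing the proof.
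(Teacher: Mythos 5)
Your framework is exactly the paper's: invoke Lemma 2 so that $\dim(C_1\cap C_2)=3-\operatorname{rank}(G_1H_2^T)$, and exhibit $G_1$ and $H_2$ in the hyperoval form of Section 3.2. Your product formula for the symmetric configuration is correct, and your plan for $l=0,1,2$ is workable (the paper simply writes down explicit matrices whose product is $\operatorname{diag}(1,1,1)$, $\operatorname{diag}(0,1,1)$ and $\operatorname{diag}(0,0,1)$ using the identities $\sum_{a\in\mathbb{F}_q^{*}}a^{j}=0$ for $j\not\equiv 0\pmod{q-1}$ and $=1$ otherwise; this is where $q>4$ enters, via $\sum a^{3}=0$). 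One small point you should not wave away there: surjectivity of $(c_1,\dots,c_{q-1})\mapsto(S_0,\dots,S_4)$ is not quite enough, since you need a preimage with \emph{all} $c_k$ nonzero; this does hold (the fibre is a coset of a $[q-1,q-6]$ MDS code, of size $q^{q-6}$, while at most $(q-1)q^{q-7}$ of its elements have a zero coordinate), but it requires saying.

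The genuine gap is $l=3$. You correctly observe that the symmetric configuration cannot give $G_1H_2^{T}=0$, but your replacement argument --- ``nine bilinear equations in $q+2$ nonzero parameters, so $q-7\geq 1$ free parameters remain'' --- is only a heuristic, and your own analysis of the symmetric case refutes its reliability: there too a naive count leaves plenty of free parameters, yet the system has no admissible solution. A bilinear system over a finite field may be inconsistent, or may force some scaling to vanish, regardless of the sign of (unknowns $-$ equations); so existence in the asymmetric configuration is asserted, not proved, and you have not even fixed which positions the relocated distinguished columns occupy. The paper's fix is different and completely explicit: keep the three distinguished columns where they are and break the symmetry in the \emph{evaluation exponents}, taking $H_2$ with rows $(1,\dots,1,1,0,0)$, $(a_1^{-1},\dots,a_{q-1}^{-1},0,1,0)$, $(a_1^{-2},\dots,a_{q-1}^{-2},0,0,1)$. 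Every entry of $G_1H_2^{T}$ is then $\sum_{a\in\mathbb{F}_q^{*}}a^{r-s}$ with $|r-s|\leq 2$ plus a diagonal contribution of $1$, and each such entry vanishes in characteristic $2$ (the diagonal terms give $(q-1)+1=0$); since $a\mapsto a^{-1}$ permutes $\mathbb{F}_q^{*}$, this $H_2$ still generates a $[q+2,3,q]_q$ hyperoval code, hence is a legitimate parity-check matrix of a $[q+2,q-1,4]_q$ MDS code. To complete your proof you would need either to adopt such an explicit witness or to actually solve your bilinear system; as written, the $l=3$ case is unproved.
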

\begin{proof}
(i) $l=0$:  Let $C_{1}$ be a $[q+2,3,q]_{q}$ MDS code with generator matrix $G_{1}$ and $C_{2}$ be a $[q+2,q-1,4]_{q}$ MDS code with parity check matrix  $H_{2}$, where
$$G_{1}=\left(
  \begin{array}{cccccc}
   a_{1} & \ldots  & a_{q-1} & 1 & 0 & 0\\
    a_{1}^{2} & \ldots  & a_{q-1}^{2} & 0 & 1 & 0\\
 a_{1}^{3} & \ldots  & a_{q-1}^{3} & 0 & 0 & 1\\
\end{array}
\right)
,
H_{2}=\left(
  \begin{array}{cccccc}
    1 & \ldots  & 1 & 1 & 0 & 0\\
 a_{1} & \ldots  & a_{q-1} & 0 & 1 & 0\\
    a_{1}^{2} & \ldots  & a_{q-1}^{2} & 0 & 0 & 1\\
\end{array}
\right)
,
G_{1}H_{2}^{T}=\left(
  \begin{array}{ccc}
    1 &  0 & 0  \\
    0 &  1  & 0 \\
    0 & 0  & 1 \\
\end{array}
\right).
$$
By Lemma 2, $\dim( C_{1} \cap C_{2})=\dim( C_{1})-\textnormal{rank}(G_{1}H_{2}^{T})=3-3=0$.

(ii) $l=1$:  Let $C_{1}$ be a $[q+2,3,q]_{q}$ MDS code with generator matrix $G_{1}$ and $C_{2}$ be a $[q+2,q-1,4]_{q}$ MDS code with parity check matrix  $H_{2}$, where
$$G_{1}=\left(
  \begin{array}{cccccc}
    1 &  \ldots & 1 &  1 &  0 &  0 \\
 a_{1} & \ldots  & a_{q-1} & 0 & 1 & 0\\
    a_{1}^{2} & \ldots  & a_{q-1}^{2} & 0 & 0 & 1\\
\end{array}
\right)
,
H_{2}=\left(
  \begin{array}{cccccc}
    1 & \ldots  & 1 & 1 & 0 & 0\\
 a_{1} & \ldots  & a_{q-1} & 0 & 1 & 0\\
    a_{1}^{2} & \ldots  & a_{q-1}^{2} & 0 & 0 & 1\\
\end{array}
\right)
,
G_{1}H_{2}^{T}=\left(
  \begin{array}{ccc}
    0 &  0 & 0  \\
    0 &  1  & 0 \\
    0 & 0  & 1 \\
\end{array}
\right).
$$
By Lemma 2, $\dim( C_{1} \cap C_{2})= \dim( C_{1})-\textnormal{rank}(G_{1}H_{2}^{T})=3-2=1$.

(iii) $l=2$:  Let $C_{1}$ be a $[q+2,3,q]_{q}$ MDS code with generator matrix $G_{1}$ and $C_{2}$ be a $[q+2,q-1,4]_{q}$ MDS code with parity check matrix  $H_{2}$, where
$$G_{1}=\left(
  \begin{array}{cccccc}
    1 &  \ldots & 1 &  1 &  0 &  0 \\
 a_{1} & \ldots  & a_{q-1} & 0 & 1 & 0\\
    a_{1}^{2} & \ldots  & a_{q-1}^{2} & 0 & 0 & 1\\
\end{array}
\right)
,
H_{2}=\left(
  \begin{array}{cccccc}
    1 & \ldots  & 1 & 1 & 0 & 0\\
 a_{1}^{-1} & \ldots  & a_{q-1}^{-1} & 0 & 1 & 0\\
    a_{1} & \ldots  & a_{q-1} & 0 & 0 & 1\\
\end{array}
\right)
,
G_{1}H_{2}^{T}=\left(
  \begin{array}{ccc}
    0 &  0 & 0  \\
    0 &  0  & 0 \\
    0 & 0  & 1 \\
\end{array}
\right).
$$
By Lemma 2, $\dim( C_{1} \cap C_{2})= \dim( C_{1})-\textnormal{rank}(G_{1}H_{2}^{T})=3-1=2$.

(iv) $l=3$:  Let $C_{1}$ be a $[q+2,3,q]_{q}$ MDS code with generator matrix $G_{1}$ and $C_{2}$ be a $[q+2,q-1,4]_{q}$ MDS code with parity check matrix  $H_{2}$, where
$$G_{1}=\left(
  \begin{array}{cccccc}
    1 &  \ldots & 1 &  1 &  0 &  0 \\
 a_{1} & \ldots  & a_{q-1} & 0 & 1 & 0\\
    a_{1}^{2} & \ldots  & a_{q-1}^{2} & 0 & 0 & 1\\
\end{array}
\right)
,
H_{2}=\left(
  \begin{array}{cccccc}
    1 & \ldots  & 1 & 1 & 0 & 0\\
 a_{1}^{-1} & \ldots  & a_{q-1}^{-1} & 0 & 1 & 0\\
    a_{1}^{-2} & \ldots  & a_{q-1}^{-2} & 0 & 0 & 1\\
\end{array}
\right)
,
G_{1}H_{2}^{T}=\left(
  \begin{array}{ccc}
    0 &  0 & 0  \\
    0 &  0  & 0 \\
    0 & 0  & 0 \\
\end{array}
\right).
$$
By Lemma 2, $\dim( C_{1} \cap C_{2})= \dim( C_{1})-\textnormal{rank}(G_{1}H_{2}^{T})=3-0=3$. \qed
\end{proof}

In summary, by Proposition 1 \cite[Proposition 3.1]{14} and Theorems 3,4, all possible linear $l$-intersection pairs of MDS codes over $\mathbb{F}_{q}$ with length $n\leq q+1$ are given. By Theorems 5,6 and 7, we give all possible linear $l$-intersection pairs of MDS codes over $\mathbb{F}_{2^{m}}$ with length $n=2^{m}+2\geq6$. As a result, all possible linear $l$-intersection pairs of MDS codes are given as follows.
\begin{theorem}
Let $q\geq3$ be a prime power and $n,k_{1}, k_{2}, l$ be non-negative integers. There exists a linear $l$-intersection
pair of MDS codes with parameters $[n, k_{1},n-k_{1}+1]_{q}$ and $[n, k_{2}, n-k_{2}+1]_{q}$ if one of the following conditions holds:\\
(i) $n\leq q+1$, $k_{1},k_{2}\leq n-1$, $\max \{k_{1}+k_{2}-n,0\} \leq l \leq \min\{k_{1}, k_{2}\}$ $($except $(n,k_{1},k_{2},l)\in \{(q+1,2,1,1),(q+1,1,2,1) \}$$)$; \\
(ii) $q=2^{m}\geq 4$, $n=q+2$, $(k_{1},k_{2})\in \{(3,q-1), (q-1,3),(3,3)\}$, $ 0 \leq l\leq 3$;\\
(iii) $q=2^{m}\geq 4$, $n=q+2$, $(k_{1},k_{2})=(q-1,q-1)$, $q-4 \leq l\leq q-1$.
\end{theorem}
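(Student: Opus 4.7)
My plan is to prove Theorem 8 by assembling the constructions established in the preceding results. Since Lemma 1 already pins down the admissible range as $\max\{k_1+k_2-n,0\} \leq l \leq \min\{k_1,k_2\}$, the task reduces to exhibiting a linear $l$-intersection pair of MDS codes for each parameter tuple inside that range, handling the three cases of the statement separately.

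For case (i), corresponding to $n \leq q+1$, I would first invoke Proposition 1 to cover the generic situation. As Remark 1 explains, Proposition 1's construction fails in exactly two regimes: $(n, k_1, k_2, l)$ with $n = q$ and $k_1 - l = k_2 - l = 1$, and $(n, k_1, k_2, l)$ with $n = q+1$ and $1 \in \{l, k_1-l, k_2-l\}$. Theorem 3 patches the first gap by explicit extended GRS codes over $\mathbb{F}_q$ with the required intersection, and Theorem 4 patches the second gap except for the two tuples $(q+1, 2, 1, 1)$ and $(q+1, 1, 2, 1)$, which are precisely the exceptions listed in the statement. The lower bound $l \geq \max\{k_1+k_2-n, 0\}$ needs no separate treatment, as it is enforced by condition (iv) of Lemma 3 inside Proposition 1's construction and is built into the parameter ranges of Theorems 3 and 4.

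For cases (ii) and (iii), corresponding to $q = 2^m \geq 4$ and $n = q+2$, I would apply the MDS Conjecture (Conjecture 1) to see that nontrivial MDS codes of length $q+2$ over $\mathbb{F}_{2^m}$ must have dimension $k \in \{3, q-1\}$. The pairs $(k_1, k_2) \in \{(3,3), (3, q-1), (q-1, 3), (q-1, q-1)\}$ enumerated in (ii) and (iii) are therefore exhaustive. I would cite Theorem 6 for $(3,3)$, Theorem 7 for $(3, q-1)$ (obtaining $(q-1, 3)$ by swapping the roles of $C_1$ and $C_2$), and Theorem 5 for $(q-1, q-1)$. In each instance the stated range of $l$ coincides with the range allowed by Lemma 1.

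Since all the hard constructive work has already been carried out in Theorems 3--7, the main obstacle in Theorem 8 itself is the bookkeeping: verifying that every admissible parameter tuple is covered by at least one of the cited results, and confirming that the two exceptions in case (i) are genuine non-existence statements rather than artefacts of the construction. The latter was established by the weight-enumerator argument in Theorem 4, sub-case 1(ii), which used Theorem 1 to show that any $[q+1, 2, q]_q$ MDS code has $A_{q+1} = 0$, ruling out the necessary inclusion $C_1 \subseteq C_2$ when $\dim C_1 = 1$ and $d(C_1) = q+1$. No new combinatorial or algebraic difficulty is expected.
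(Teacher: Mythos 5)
Your proposal is correct and follows essentially the same route as the paper, which establishes Theorem 8 precisely by combining Proposition 1 with Theorems 3 and 4 for $n\leq q+1$ (the two gaps identified in Remark 1, with the exceptions $(q+1,2,1,1)$ and $(q+1,1,2,1)$ ruled out by the weight-distribution argument in Theorem 4) and Theorems 5, 6 and 7 for $n=q+2$. The only minor point worth noting is that Theorems 6 and 7 are stated for $q>4$, but for $q=4$ the pairs $(3,3)$, $(3,q-1)$, $(q-1,3)$, $(q-1,q-1)$ all coincide and are covered by Theorem 5, so no case is missed.
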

\section{Constructions of pure MDS AEAQECCs}
In this section, we utilize Theorem 8 to give a complete characterization of pure MDS AEAQECCs. First, we give a useful lemma as follows.
\begin{lemma}
Let $C_{1}$ and $C_{2}$ be two MDS codes. If $C_{1}\nsubseteq C_{2}$, then $$wt(C_{1}\setminus (C_{1}\cap C_{2}))=wt(C_{1}).$$
\end{lemma}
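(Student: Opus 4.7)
The plan is to show both inequalities. Since $0 \in C_1 \cap C_2$, every element of $C_1 \setminus (C_1 \cap C_2)$ is a nonzero codeword of $C_1$, so the trivial direction $wt(C_1 \setminus (C_1 \cap C_2)) \geq wt(C_1)$ is immediate. The real content is the reverse inequality: I need to exhibit a codeword of $C_1$ of minimum weight that does not lie in $C_1 \cap C_2$.

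My main tool will be the well-known structural fact that for an $[n,k,n-k+1]_q$ MDS code $C_1$, the minimum weight codewords span $C_1$. I would recall the reason briefly: if $H$ is a parity check matrix of $C_1$, then for any subset $S \subseteq \{1,\dots,n\}$ of size $d = n-k+1$, the $(n-k) \times d$ submatrix $H_S$ has one-dimensional kernel (by the MDS property that every $n-k$ columns of $H$ are independent), and the unique-up-to-scalar codeword supported on $S$ must in fact be nonzero on every position of $S$, for otherwise its support size would be less than $d$, contradicting the minimum distance. Since the collection of such supports covers all coordinates (indeed covers every $d$-subset), the resulting minimum weight codewords span $C_1$.

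With this in hand the proof finishes in one step. The hypothesis $C_1 \nsubseteq C_2$ means that $C_1 \cap C_2$ is a proper subspace of $C_1$. If every minimum weight codeword of $C_1$ were contained in $C_1 \cap C_2$, then their span $C_1$ would also be contained in $C_1 \cap C_2$, contradicting properness. Hence there exists a minimum weight codeword $\bm{c} \in C_1$ with $\bm{c} \notin C_1 \cap C_2$, so $\bm{c} \in C_1 \setminus (C_1 \cap C_2)$ and $wt(C_1 \setminus (C_1 \cap C_2)) \leq wt(\bm{c}) = wt(C_1)$. Combined with the trivial direction, equality holds.

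The only delicate point — and the place I would expect a careful reader to want explicit justification — is the spanning claim for minimum weight codewords of an MDS code. Everything else is essentially bookkeeping, so I would state the spanning property as the single nontrivial ingredient and cite or sketch it, then close the argument in two lines. No additional case analysis on the parameters of $C_2$ or on the dimension $l = \dim(C_1 \cap C_2)$ is needed; all that is used is that $C_1 \cap C_2$ is a proper subspace of $C_1$.
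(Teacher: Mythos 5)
Your proof is correct and follows essentially the same route as the paper: both arguments reduce the lemma to the fact that $C_1$ has a basis of minimum-weight codewords --- you obtain one such codeword per $d$-subset from the one-dimensional kernel of the corresponding $(n-k)\times d$ submatrix of a parity-check matrix, while the paper counts them via the MDS weight enumerator --- and then both conclude that a spanning set cannot lie entirely inside the proper subspace $C_1\cap C_2$. The only inference you should tighten is ``the supports cover all coordinates, hence the codewords span'': coverage of coordinates alone does not imply spanning, and the standard fix (which is exactly what the paper makes explicit) is to take the $k$ staircase supports $\{i,\dots,i+d-1\}$ for $i=1,\dots,k$, whose associated full-support minimum-weight codewords are linearly independent because of the triangular pattern of their leading nonzero positions.
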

\begin{proof}
Let $C_{1}$ be an $[n,k_{1},n-k_{1}+1]_{q}$ MDS code and $\mathcal{A}=\{\bm{a} \in \mathbb{F}_{2}^{n}: wt(\bm{a})=n-k_{1}+1\}$, then $|\mathcal{A}|={ n \choose n-k_{1}+1}$. For any $\bm{a}\in \mathcal{A}$, we define that $C_{\bm{a}}=\{\bm{c} \in C_{1}: wt(\bm{c})=n-k_{1}+1, wt(\bm{c},\bm{a})=n-k_{1}+1 \}$, where $wt(\bm{c},\bm{a})=\sharp \{i:(c_{i},a_{i})\neq (0,0)\}$. For any $\bm{a},\bm{b}\in \mathcal{A}$ and $\bm{a}\neq \bm{b}$, we can easily find that $|C_{\bm{a}}| = q-1$ and $C_{\bm{a}} \cap C_{\bm{b}}=\emptyset$, thus $|\bigcup_{\bm{a}\in \mathcal{A}}C_{\bm{a}}|=(q-1){ n \choose n-k_{1}+1}$. By Theorem 1, the number of $\bm{c}\in C_{1}$ with weight $n-k_{1}+1$ is $(q-1){ n \choose n-k_{1}+1}$, then $\bigcup_{\bm{a}\in \mathcal{A}}C_{\bm{a}}=\{\bm{c} \in C_{1}: wt(\bm{c})=n-k_{1}+1\}$.

Choose $\bm{a}_{i}=(\underbrace{0,\ldots,0}_{i-1},\underbrace{1,\ldots,1}_{n-k_{1}+1},0,\ldots,0) \in \mathcal{A}$ for $i=1,\ldots,k_{1}$ and $\bm{c}_{i}\in C_{\bm{a}_{i}}$. Obviously, $\bm{c}_{1},\ldots,\bm{c}_{k_{1}}$ are linearly independent, hence, $\{\bm{c}_{i}\}_{i=1,\ldots,k_{1}}$ is a basis of $C_{1}$. If  all $\bm{c}_{i} \in C_{2}$ for $i=1,\ldots,k_{1}$, then $C_{1} \subseteq C_{2}$, which leads to a contradiction. Therefore, there exists a $\bm{c}\in \{\bm{c}_{i}\}_{i=1,\ldots,k_{1}}$ with $wt(\bm{c})=n-k_{1}+1$ satisfying $\bm{c} \notin C_{2}$, i.e., $\bm{c} \in C_{1}\setminus (C_{1}\cap C_{2})$, then $$ n-k_{1}+1=wt(C_{1})\leq wt(C_{1}\setminus (C_{1}\cap C_{2}))\leq wt(\bm{c})=n-k_{1}+1.$$
Therefore, the lemma holds.\qed
\end{proof}
\begin{theorem}
Let $q\geq3$ be a prime power, $n,k_{1}, k_{2}, l$ be non-negative integers. There exists a pure MDS  $[[n,k_{2}-l,(k_{1}+1)/(n-k_{2}+1),k_{1}-l]]_{q}$ AEAQECC  if one of the following conditions holds:\\
(i) $n\leq q+1$, $k_{1},k_{2}\leq n-1$, $\max \{k_{1}+k_{2}-n,0\} \leq l <\min\{k_{1}, k_{2}\}$; \\
(ii) $q=2^{m}\geq 4$, $n=q+2$, $(k_{1},k_{2})\in \{(3,q-1), (q-1,3),(3,3)\}$, $0\leq l\leq 2$;\\
(iii) $q=2^{m}\geq 4$, $n=q+2$, $(k_{1},k_{2})=(q-1,q-1)$, $q-4\leq l\leq q-2$.
\end{theorem}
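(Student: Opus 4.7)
The plan is to derive Theorem 9 directly from Theorem 8 by feeding the $l$-intersection pair into the CSS-type construction of Theorem 2, and then using the new Lemma 4 to certify purity. Given any parameter tuple $(n, k_1, k_2, l)$ satisfying condition (i), (ii) or (iii), Theorem 8 produces an $l$-intersection pair $(C_1, C_2)$ of MDS codes with parameters $[n, k_1, n-k_1+1]_q$ and $[n, k_2, n-k_2+1]_q$. The two tuples $(q+1, 2, 1, 1)$ and $(q+1, 1, 2, 1)$ excluded from Theorem 8(i) both violate the strict bound $l < \min\{k_1, k_2\}$ imposed in Theorem 9(i) and hence cause no trouble.

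The key move is to apply Theorem 2 not to $(C_1, C_2)$ itself but to the pair $(D_1, D_2) := (C_1, C_2^{\bot})$; since the dual of an MDS code is MDS, $D_2$ is an $[n, n-k_2, k_2+1]_q$ MDS code. Then $D_1 \cap D_2^{\bot} = C_1 \cap C_2$ has dimension $l$, so
\[
c = \dim(D_1) - \dim(D_1 \cap D_2^{\bot}) = k_1 - l, \quad n - \dim(D_1) - \dim(D_2) + c = k_2 - l,
\]
which matches the required quantum dimension and entanglement parameter in the theorem statement.

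For purity, Theorem 2 outputs $d_z = wt\bigl(C_1^{\bot} \setminus (C_1^{\bot} \cap C_2^{\bot})\bigr)$ and $d_x = wt\bigl(C_2 \setminus (C_1 \cap C_2)\bigr)$. In every subcase of the hypothesis one verifies $l < k_2$: case (i) gives this directly, in (ii) we have $l \leq 2 < 3 \leq k_2$, and in (iii) we have $l \leq q-2 < q-1 = k_2$. Therefore $C_2 \nsubseteq C_1$, and dually $C_1^{\bot} \nsubseteq C_2^{\bot}$. Lemma 4, applied to the MDS pairs $(C_2, C_1)$ and $(C_1^{\bot}, C_2^{\bot})$, then collapses these relative minimum weights to $d_x = wt(C_2) = n-k_2+1$ and $d_z = wt(C_1^{\bot}) = k_1+1$. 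The pure-MDS identity $d_z + d_x = (k_1+1) + (n-k_2+1) = n - (k_2-l) + (k_1-l) + 2 = n - k + c + 2$ is then immediate. The only real obstacle is conceptual: choosing the correct dualisation $D_2 = C_2^{\bot}$ (the alternative $D_2 = C_2$ produces valid AEAQECC parameters, but they do not match those stated), and observing that Lemma 4 applies uniformly across all three cases because each case forces $l$ strictly below $\min\{k_1,k_2\}$.
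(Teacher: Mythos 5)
Your proposal is correct and follows essentially the same route as the paper: take the $l$-intersection pair from Theorem 8, dualize one member before feeding the pair into Theorem 2 so that $c=k_{1}-l$ and the dimension is $k_{2}-l$, and then invoke Lemma 4 (justified by $l<\min\{k_{1},k_{2}\}$, which also disposes of the two tuples excluded from Theorem 8(i)) to collapse the relative weights to $wt(C_{1}^{\bot})=k_{1}+1$ and $n-k_{2}+1$, yielding $d_{x}+d_{z}=n-k+c+2$. The only difference is notational -- the paper names the intersection pair $(C_{1},C_{2}^{\bot})$ from the outset rather than dualizing afterwards -- so the two arguments coincide.
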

\begin{proof}
Let $C_{1}$ and $C_{2}^{\bot}$ be MDS codes with parameters $[n, k_{1},n-k_{1}+1]_{q}$ and $[n, k_{2}, n-k_{2}+1]_{q}$ respectively, where $n,k_{1},k_{2},l$ satisfy one of the conditions in Theorem 8. Then they form a linear $l$-intersection pair by Theorem 8, i.e., $\dim(C_{1}\cap C_{2}^{\bot})=l$.

By Theorem 2, $C_{1}$ and the dual code $C_{2}$ of $C_{2}^{\bot}$ can be used to construct an AEAQECC  with parameters $[[n,n-k_{1}-(n-k_{2})+c,d_{z}/d_{x},c]]_{q}$, where $d_{z}=wt\big(C_{1}^{\bot}\setminus (C_{2}\cap C_{1}^{\bot})\big)$,
$d_{x}=wt\big(C_{2}^{\bot}\setminus (C_{1}\cap C_{2}^{\bot})\big)$ and $c=\textnormal{rank}(G_{1}G_{2}^{T})=\dim(C_{1})-\dim(C_{1}\cap C_{2}^{\bot})=k_{1}-l$. Hence, the parameters are $[[n,k_{2}-l,d_{z}/d_{x},k_{1}-l]]_{q}$.

When $l=\min\{k_{1}, k_{2}\}$, we have $C_{1} \cap C_{2}^{\bot}\in\{ C_{1}, C_{2}^{\bot} \}$. If $C_{1} \cap C_{2}^{\bot}=C_{1}$, then $c=k_{1}-k_{1}=0$, i.e., the code doesn't need entanglement. If $C_{1} \cap C_{2}^{\bot}=C_{2}^{\bot}$, then $wt\big(C_{2}^{\bot}\setminus (C_{1} \cap C_{2}^{\bot})\big)=wt(C_{2}^{\bot}\setminus C_{2}^{\bot})=0$, which  makes no sense. Therefore, we just consider the parameters $n,k_{1}, k_{2}, l$ satisfy one of the conditions in Theorem 8 and $ l < \min\{k_{1}, k_{2}\}$, i.e., the parameters should satisfy one of the (i),(ii) and (iii).

For $ l < \min\{k_{1}, k_{2}\}$, it follows that $ C_{2}^{\bot} \nsubseteq C_{1}$ and $ C_{1}^{\bot} \nsubseteq C_{2}$. Therefore, by Lemma 4, $d_{z}=wt\big(C_{1}^{\bot}\setminus (C_{2}\cap C_{1}^{\bot})\big)=wt(C_{1}^{\bot})=k_{1}+1$ and
$d_{x}=wt\big(C_{2}^{\bot}\setminus (C_{1}\cap C_{2}^{\bot})\big)=wt(C_{2}^{\bot})=n-k_{2}+1$. Note that $$d_{x}+d_{z}=k_{1}+n-k_{2}+2=n-(k_{2}-l)+(k_{1}-l)+2,$$ it follows that it's a pure  MDS AEAQECC.

In summary, there exist  pure MDS $[[n,k_{2}-l,(k_{1}+1)/(n-k_{2}+1),k_{1}-l]]_{q}$ AEAQECCs  when $n,k_{1},k_{2},l$ satisfy one of the (i),(ii) and (iii). \qed
\end{proof}
\begin{remark}
Let $Q$ be an $[[n,n-k_{1}-k_{2}+c,d_{z}/d_{x},c]]_{q}$ AEAQECC constructed by linear codes $C_{1}$ and $C_{2}$ with parameters $[n,k_{1}]_{q}$ and  $[n,k_{2}]_{q}$ respectively. $Q$ is a pure MDS AEAQECC if and only if  $d_{z}=wt(C_{1}^{\bot})$ and
$d_{x}=wt(C_{2}^{\bot})$ with $d_{x}+d_{z}=k_{1}+k_{2}+2$. Note that $wt(C_{1}^{\bot})\leq k_{1}+1$ and $wt(C_{2}^{\bot})\leq k_{2}+1$. Therefore, $Q$ is a pure MDS AEAQECC if and only if $C_{1}$ and $C_{2}$ are MDS codes such that $d_{z}=wt(C_{1}^{\bot})$ and
$d_{x}=wt(C_{2}^{\bot})$. Assuming the validity of MDS Conjecture, the length of MDS codes over $\mathbb{F}_{q}$ is no more than $q+2$, hence, the length of pure MDS AEAQECCs  over $\mathbb{F}_{q}$ is no more than $q+2$ too. As a result, we obtain  pure MDS AEAQECCs for all possible parameters by Theorem 9.
\end{remark}
\section{Conclusions}
In this paper,  we firstly construct linear $l$-intersection pairs of MDS codes with parameters $[n, k_{1}, n-k_{1}+1]_{q}$ and $[n, k_{2}, n-k_{2}+1]_{q}$ where  $(n,k_{1},k_{2},l)=(q,l+1,l+1,l)$ for $0\leq l\leq q-2$ and $n=q+1$ with $1 \in \{ l, k_{1}-l, k_{2}-l \}$ for $k_{1},k_{2}\leq q$, which complement the results in \cite{14}. Moreover, we also construct all possible linear $l$-intersection pairs of MDS codes over $\mathbb{F}_{2^{m}}$ with length $n=2^{m}+2\geq6$. In summary, all possible linear $l$-intersection pairs of MDS codes are given. As an application, we utilize  linear $l$-intersection pairs of MDS codes to determine the required number of  maximally entangled states of an AEAQECC. As a result, a complete characterization of pure MDS AEAQECCs for all possible parameters is given.
\begin{acknowledgements}
The research of Z. Huang and F.-W. Fu  is supported in part by the National Key Research and Development Program of China (Grant No. 2018YFA0704703), the National Natural Science Foundation of China (Grant No. 61971243), the Natural Science Foundation of Tianjin (20JCZDJC00610), the Fundamental Research Funds for the Central Universities of China (Nankai University). The research of W. Fang is supported in part by the China  Postdoctoral  Science  Foundation  under  Grant  2020M670330, Guangdong Basic and Applied Basic Research Foundation under Grant 2019A1515110904.
\end{acknowledgements}

%
%



\end{document}